\def\sqr#1#2{{\vcenter{\vbox{\hrule height.#2pt
        \hbox{\vrule width.#2pt height#1pt \kern#2pt
        \vrule width.#2pt}
        \hrule height.#2pt}}}}
\newcommand{\nc}{\newcommand}
\nc{\parent}[1]{$[\![#1]\!]$}
\newtheorem{theorem}{Theorem}
\newtheorem{question}{Question}
\newtheorem{remark}{Remark}
\newtheorem{definition}{Definition}
\newtheorem{exercise}{Exercise}
\newenvironment{pf-main}{{\sc Proof of Theorem \ref{mainresult}.}\hspace{3mm}}{\qed}
\nc{\esssup}{\mathrm{ess}\mbox{ }\mathrm{sup}}
\nc{\cadlag}{c\`{a}dl\`{a}g } \nc{\ba}{\begin{array}} \nc{\ea}{\end{array}} \nc{\be}{\begin{equation}}
\nc{\ee}{\end{equation}} \nc{\bea}{\begin{eqnarray}} \nc{\eea}{\end{eqnarray}} \nc{\bean}{\begin{eqnarray*}}
\nc{\eean}{\end{eqnarray*}} \nc{\bu}{\bullet} \nc{\nn}{\nonumber} \nc{\cA}{{\mathcal A}} \nc{\cB}{{\mathcal B}}
\nc{\cC}{{\mathcal C}} \nc{\cD}{{\mathcal D}} \nc{\bbD}{\mathbb{D}} \nc{\cG}{{\mathcal G}} \nc{\cF}{{\mathcal F}} \nc{\cN}{{\mathcal N}} 
\nc{\cS}{{\mathcal S}} \nc{\cU}{{\mathcal U}} \nc{\cH}{{\mathcal H}} \nc{\cJ}{{\mathcal J}}\nc{\cK}{{\mathcal K}} \nc{\cM}{{\mathcal M}}
\nc{\cP}{{\mathcal P}} \nc{\bbE}{\mathbb{E}} \nc{\bbEQ}{\mathbb{E}_{\mathbb{Q}}} \nc{\eps}{\varepsilon}
\nc{\bbEP}{\mathbb{E}_{\mathbb{P}}}\nc{\bbL}{\mathbb{L}}  \nc{\bbH}{\mathbb{H}} \nc{\bbG}{\mathbb{G}}
\nc{\bbF}{\mathbb{F}}\nc{\bbP}{\mathbb{P}} \nc{\bbQ}{\mathbb{Q}}
\nc{\Om}{\Omega} \nc{\om}{\omega} \nc{\bbR}{\mathbb{R}} \nc{\bbC}{\mathbb{C}} \nc{\bfr}{\begin{flushright}}
\nc{\efr}{\end{flushright}} \nc{\dXt}{\Delta X_{t}} \nc{\dXs}{\Delta X_{s}} \nc{\bs}{\blacksquare} \nc{\dX}{\Delta
X} \nc{\dY}{\Delta Y} \nc{\half}{\frac{1}{2}} \nc{\dnkx}{\left(X(T^{n}_{k})-X(T^{n}_{k-1})\right)} \nc{\zmax}{Z^{\max}}
\nc{\zmin}{Z^{\min}}
\def\rar{\rightarrow}
\nc{\chf}{\mbox{$\mathbf1$}}
 \nc{\eid}{\stackrel{d}{=}}
\nc{\what}{\widehat} \nc{\fhat}{\what{f}}\nc {\parx}{\frac{\partial}{\partial x}} \nc
\nc{\bfT}{\mathbf{T}}
\begin{document}
\title{Mathematics of Market Microstructure under Asymmetric Information}
	\author{Umut \c{C}et\i n}\thanks{Lecture notes for the summer course given for 2018 Mathematical Finance Summer School at Shandong Unversity.}
\address{London School of Economics, Department of Statistics, Columbia House, Houghton Street, London WC2A 2AE}\date{\today}
\maketitle
\section{Motivation}
A vast number of financial assets change hands every day and, as a
result, the prices of these assets fluctuate continuously. What drives the asset prices is the expectation of the assets' future
payoffs and a price is set when a buyer and a seller agree on an
exchange. At any given point in time, usually there are numerous
agents in the market who are interested in trading. Some agents are
individuals with relatively small portfolios while others are
investments banks or hedge funds who are acting on behalf of large
corporations or a collection of individuals. Obviously, these agents
have different attitudes toward risk and do not have equal access to
trading technologies, information or other relevant resources. Moreover, it is very rare that the
cumulative demand of the buyers will be  met by the total shares offered by the
sellers without any excess supply. An imbalance of demand and
supply is the rule rather than an exception in today's markets, which brings {\em liquidity risk} to the fore. These
features of the modern markets challenge the conventional asset pricing
theories which assume {\em perfect competition} and {\em no} liquidity risk.

Market microstructure theory provides an alternative to {\em
	frictionless} models of trading behaviour that assume perfect
competition and free entry. To quote O'Hara \cite{MOH}, ``[It] is the
study of the process and outcomes of exchanging assets under explicit
trading rules.'' Thus, market microstructure   analyses how a specific trading mechanism or heterogeneity of traders affect the price formation,
comes up with measures for market liquidity and studies the
sensitivity of liquidity and other indicators of market behaviour on
different trading mechanisms and information heterogeneity. 

The primary goal of Market Microstructure (MS) models is to understand the {\em temporary} and
{\em permanent} impacts of the trades on the asset price and how the
price-setting rules evolve in time.  In real markets bid
and ask prices are announced  by {\em specialists} or {\em dealers},
whom we will collectively call {\em market makers} henceforth.  The early literature on market
microstructure (\cite{Garman}, \cite{S78}, \cite{AM80}, and \cite{HS81})  have started with the simple observation that
the trades could involve some
implicit costs due to the need for immediate execution, which is
provided by the market makers. At the same
time, the market makers take into account their inventory level when
making pricing decisions.  These works have concluded that the market makers adjust the prices in
order to keep their inventories around a certain level in the long
run: they lower the price when their inventory levels are too high and raise
the prices when they are short large quantities. As the market makers
want to keep their inventories around a fixed level, the impact of
trades are {\em transitory} since the prices are also expected to mean
revert.

The MS research have later shifted its focus to models with asymmetric
information, which account for permanent changes in the price. 
The canonical model of markets with asymmetric information is due to
Kyle \cite{K}. Kyle studies a market for a single risky asset whose
price is determined in equilibrium in discrete time. The key feature of this model is that the market makers cannot distinguish between the
informed and uninformed trades and compete to fill the net demand.  In this model market makers `learn’
from the net demand by `filtering’ what the informed trader knows,
which is `corrupted' by the demand of the uninformed traders. The market makers learn from the order flow and they update their pricing strategies as a result of this learning mechanism.

To get a flavour of the Kyle model suppose that there is an asset whose value $V$ will be revealed at time $1$. Assume further the existence of an insider who knows the value of $V$ at time $0$.  To simplify the matters the insider will be allowed to trade once at time $0$ and liquidate her position at time $1$. 

At time $0$ there are also {\em noise traders} who are not strategic and their cumulative demand for the asset is given by $\nu \sim N(0,\sigma_n^2)$. Consistent with the term 'noise' $\nu$ is assumed to be independent of $V$. 

If the insider trades $\theta$ many shares, the market makers observe the net demand $Y:=\theta+n$ and take the opposite side to clear the market by setting a price. They know the distribution of $V$ but no other relevant information regarding its value. The market makers are {\em risk neutral} and compete in a Bertrand fashion to fill the aggregate order $Y$. That is, the price $h(y)$ chosen by the market makers for $Y=y$ is such that their expected profit is $0$. Since they will also liquidate their position at time $1$ at price $V$, this implies
\be \label{e:mart}
h(y)=E[V|Y=y].
\ee
Given this pricing rule $y\mapsto h(y)$ the insider needs to find her optimal trading amount. Suppose further that $V\sim N(\mu, \sigma^2)$. We are interested in a Nash-type equilibrium: $(\theta,h)$ will constitute and equilibrium if
\begin{enumerate}
	\item Given $h$, $\theta$ maximises the expected profit of the insider;
	\item Given $\theta$, $h$ satsifies (\ref{e:mart}).
\end{enumerate} 
Let us now prove that a {\em linear equilibrium} in which $h(y)=a + by$ and $\theta=\alpha + \beta V$ exists.  First observe that if $h(y)=a+\lambda y$, the insider's optimisation problem given $V=v$ is
\[
\max_{\alpha, \beta } E[(\alpha +\beta v)(v- a-\lambda(\nu+ \alpha +\beta v)).
\]
The profit/loss is quadratic in parameters and the first order condition yields:
\be \label{e:FOC}
\alpha +\beta v= \frac{v-a}{2\lambda}.
\ee
On the other hand, (\ref{e:mart}) requires
\[
a+ \lambda Y= E[V| Y].
\]
Now, since $(V,\nu)$ is a Gaussian vector, the conditional distribution of $V$ given $Y$ is also Gaussian, which can be determined by Bayes' rule. 
Formally,
\[
P(V\in dv |Y=y)= P(Y\in dy|V=v)\frac{P(V\in dv)}{P(Y\in dy)}\sim P(Y\in dy|V=v)P(V\in dv).
\]
Moreover, given $V=v$, $Y=\nu+ \alpha + \beta V\sim N(\alpha+\beta v, \sigma_{\nu}^2)$. Thus, $P(Y\in dy|V=v)$ is proportional to
\[
\exp\left(-\frac{(y-\alpha-\beta v)^2}{2\sigma_{\nu}^2}\right).
\]
Hence,
\[
P(V\in dv |Y=y)\sim\exp\left(-\frac{(y-\alpha-\beta v)^2}{2\sigma_{\nu}^2}-\frac{v^2}{2\sigma^2}\right)\sim\exp(-\frac{(v-\hat{\mu})^2}{2\Sigma^2}),
\]
where
\[
\frac{1}{\Sigma^2}=\frac{1}{\sigma^2}+\frac{\beta^2}{\sigma_{\nu}^2},\quad \hat{\mu}=\beta(y-\alpha)\frac{\Sigma^2}{\sigma^2_Z}.
\]
That is, $V$ is Gaussian with mean $\hat{\mu}$ and variance $\Sigma^2$ given $Y=y$. Thus,
\[
a+\lambda y=\beta(y-\alpha)\frac{\Sigma^2}{\sigma^2_Z}=\beta(y-\alpha)\frac{\sigma^2}{\beta^2\sigma^2+\sigma_{\nu}^2},
\]
which in turn yields
\[
\lambda=\frac{\beta \sigma^2}{\beta^2\sigma^2+\sigma_{\nu}^2} \mbox{ and } a=- \alpha\lambda.
\]
 Recall that (\ref{e:FOC}) implies $2\lambda \beta =1$. Thus,
 \[
 \beta \sigma^2 =\frac{\beta\sigma^2}{2}+\frac{\sigma^2_Z}{2\beta}.
 \]
 Consequently, $\beta =\frac{\sigma_{\nu}}{\sigma}$ and $\lambda=\frac{\sigma}{2\sigma_{\nu}}$.
The remaining two equations for $a$ and $\alpha$ are satisfied only if $a=\alpha=0$. 
\begin{question}
	What is the interpretation of equilibrium values of $\lambda$ and $\beta$?
\end{question}
\begin{question}
	Without making all the calculations above can you guess the equilibrium value of $a$ and $\alpha$ if the mean of $V$ were different than $0$?
\end{question}
{\em Value of information:} Given the above explicit characterisation of equilibrium we can compute the equilibrium level of wealth of the insider, which is given by
\[
(1-\lambda\beta)\beta v^2=\beta \frac{v^2}{2}
\]
Thus, the {\em ex-ante}, i.e. unconditional, value of information equals
\be \label{e:infval}
\beta \frac{\sigma^2}{2}=\frac{\sigma\sigma_{\nu}}{2}.
\ee
\begin{question}
	What is the interpretation of (\ref{e:infval})?
\end{question}
\section{Kyle model in continuous time}
If a trader has some private information regarding the future value of the asset, she would like to take advantage of this and trade dynamically, not just once as above. The continuous time version of the Kyle model is formalised by K. Back \cite{B}. Although in the literature it is usually assumed that the informed investor knows the future asset value perfectly, this is not a necessary assumption as we shall soon see. 

Let us suppose that the time-1 value of the traded asset is given by some random variable $V$, which will become public knowledge at $t=1$ to all market participants.  

We shall work on a filtered probability space $(\Omega, \cG, (\cG_t)_{t\in [0,1]}, \bbQ)$.  

Three types of agents  trade in the market. They differ in their information sets, and objectives, as follows.

\begin{itemize}
	\item \textit{Noise/liquidity traders} trade for liquidity reasons, and
	their total demand at time $t$ is given by a standard $(\cG_t)$-Brownian motion $B$ independent of $Z$. (We normalise the variance of the noise trades so that it is given by a Brownian motion with unit variance)
	\item \textit{Market makers} observe only the total demand
	\[
	Y=\theta+B, 
	\]
	where $\theta$ is the demand process of the informed trader. The admissibility condition imposed later on $\theta$ will entail in particular that $Y$ is a semimartingale. 
	
	They set the price of the risky asset via a {\em Bertrand competition} and clear the market. We assume that the market makers set the price as a function of the total order process at time $t$, i.e. we
	consider pricing functionals $S\left( Y_{[0,t]},t\right) $ of the following form%
	\begin{equation} \label{mm:e:rule_mm}
	S\left( Y_{[0,t]},t\right) =H\left(t, Y_t\right), \qquad \forall t\in [0,1).
	\end{equation}
	 Moreover, a pricing rule $H$ has to be admissible in the sense of Definition \ref{mm:d:prule}.  In particular, $H \in C^{1,2}$ and, therefore, $S$ will be a semimartingale as well.
	\item \textit{The informed trader (insider)} observes the price process $
	S_{t}=H\left(t, Y_t\right)$  and her private signal, $Z$. Based in her signal, she makes an educated guess about $V$. Thus, there exists a measurable function $f$ such that
	\[
	f(Z)=E[V|Z].
	\]
	We assume that $Z$ is a continuous random variable and $f$ is continuous. Thus, without loss of generality we can take $Z$ to be a standard Normal random variable - possibly the time-1 value of some Brownian motion. Moreover, $f$ can be taken strictly increasing (why?). This entails in particular that  the larger the signal $Z$ the larger the value of the risky asset for the informed trader. 
	
	She is assumed to be 
	risk-neutral, her objective is to maximize the expected final wealth.  
	\bean
	&&\sup_{\theta \in \mathcal{A}(H)}E^{0,z}\left[ W_{1}^{\theta
	}\right], \mbox{ where} 
	\\
	W_{1}^{\theta
	}&=& (V-S_{1-}))\theta_{1-}+ \int_0^{1-}\theta_{s-}dS_s.
\eean
However, using the tower property of conditional expectations, the above problem is equivalent to
\bea
&&\sup_{\theta \in \mathcal{A}(H)}E^{0,z}\left[ W_{1}^{\theta
}\right], \mbox{ where} \label{ins_obj}\\
W_{1}^{\theta
}&=&	(f(Z)-S_{1-})\theta _{1-}+\int_{0}^{1-}\theta _{s-}dS_{s}. \label{mm:eq:insW}
	\eea
	In above $\mathcal{A}(H)$ is the set of admissible trading strategies
	for the given pricing rule\footnote{Note that this implies  the insider's
		optimal trading strategy takes into account the \emph{feedback
			effect}, i.e. that prices react to her trading strategy.} $H$, which will be defined in Definition \ref{mm:d:iadm}. Moreover, $E^{0,z}$ is the expectation with respect to $P^{0,z}$, which is the probability measure on $\sigma(Y_s, Z; s\leq 1)$ generated by  $(Y,Z)$ with $Y_0=0$ and $Z=z$.  
	
	Thus, the insider maximises the
	expected value of her final wealth
	$W_{1}^{\theta }$, where the first term on the right hand side of equation (%
	\ref{ins_obj}) is the contribution to the final wealth due to a potential
	differential between  the market price and the fundamental value or insider's private valuation at the time of information
	release, and the second term is the contribution to the final wealth coming from
	the trading activity.
\end{itemize}

Given the above market structure we can now precisely define the filtrations of the market makers and of the informed trader.  As we shall require them to satisfy the usual conditions, we first define the probability measures that will be used in the completion of their filtrations. 

First  define $\cF:=\sigma(B_t,Z; t \leq 1)$ and let $P^{0,z}$ be the probability measure on $\cF$ generated by $(B,Z)$ with $B_0=0$ and $Z=z$.  Also define the probability measure $\bbP$ on $(\Om, \cF)$  by
\be \label{mm:d:bbP}
\bbP(E)=\int_{\bbR} Q^{0,z}(E) \bbQ(Z_0\in dz),
\ee
for any $E \in \cF$.  
\subsection{Some technicalities regarding filtrations and null sets}
While $P^{0,z}$ is how the informed trader assign likelihood to the events generated by $B$ and $Z$, $\bbP$ is the probability distribution of the market makers who do not observe $Z_0$ exactly. Thus, the market makers' filtration, denoted by $\cF^M$, will be the right-continuous augmentation with the $\bbP$-null sets of the filtration generated by $Y$.

On the other hand, since the informed trader knows the value of $Z_0$ perfectly, it is plausible to assume that her filtration is augmented with the $P^{0,z}$-null sets. However, this will make the modelling cumbersome since the filtration will have an extra dependence on the value of $Z_0$ purely for technical reasons.  Another natural choice is to consider the intersection of all these augmentations. That is,
\[
\cF^I_t=\cap_{z\in \bbR}\cF_t^{I,z},
\]
where $\cF_t^{I,z}$ is the usual augmentation of the filtration generated by $Z$ and $S$ and completed with the null sets of $P^{0,z}$.

{\em \underline{Key observation for the filtrations:}} The filtration of the insider is the {\em enlargement} of the market filtration with her own signal $Z$.

{\em \underline{Key observation for the probability measures:}} The probability measure of the insider is {\em singular} with respect to the probability measure of the market makers. Indeed, $P^{0,z}(F(Z)=f(z))=1$ while $\bbP(f(Z)=f(z))=0$. 
\subsection{Definition of equilibrium}
We are finally in a position to give a rigorous definition of  the rational expectations equilibrium of this market, i.e. a pair consisting of
an \emph{admissible} pricing rule and an \emph{admissible}
trading strategy such that: \textit{a)} given the pricing rule
the trading strategy is optimal, \textit{b)} given the trading
strategy,  the pricing rule is {\em rational} in the following sense:
\be \label{mm:d:mm_obj}
H(t,Y_t)=S_t=\mathbb{E}\left[f(Z_1)|\mathcal{F}_t^M\right],
\ee
where $\bbE$ corresponds to the expectation operator under $\bbP$.
To formalize
this definition of equilibrium, we first  define the sets of admissible
pricing rules and trading strategies.

\begin{definition}\label{mm:d:prule} An {\em admissible
		pricing rule} is any function $H$ fulfilling the following
	conditions:
	\begin{enumerate}
		\item $H \in C^{1,2}([0,1) \times \bbR)$.
		\item $x \mapsto H (t,x)$ is strictly increasing for every $t\in [0,1)$;
	\end{enumerate}
\end{definition}
\begin{remark} \label{mm:r:insfilt} The strict monotonicity of $H$ in the space variable implies $H$ is invertible prior to time $1$,
	thus, the filtration of the insider is generated by $Y$ and $Z$. This in turn implies that $(\cF^{S,Z}_t)=(\cF^{B,Z}_t)$, i.e. the insider has full information about the market. 
\end{remark}
In view of the above one can take  $\cF^I_t=\cF^{B,Z}_t$ for all $t\in [0,1]$. 
\begin{definition} \label{mm:d:iadm}
	An $\cF^{B,Z}$-adapted  $\theta$ is said to be an  admissible trading
	strategy for a  given pricing rule $H$  if
	\begin{enumerate} \item $\theta$ is  a semimartingale on $(\Om, \cF, (\cF^{B,Z}_t), Q^{0,z})$ with summable jumps for each $z \in \bbR$,
		\item and no doubling strategies are
		allowed, i.e. for all $z \in \bbR$
		\begin{equation}
		E^{0,z}\left[ \int_{0}^{1}H^{2}\left(t,X_t\right)dt\right] <\infty.
		\label{mm:e:theta_cond_2}
		\end{equation}
		The set of admissible trading strategies for the  given  $H$ is denoted with $\mathcal{A}(H)$.
	\end{enumerate}
\end{definition} 

It is standard (see, e.g., \cite{BP}  or \cite{Wu}) in the insider trading literature to limit
the set of admissible strategies to absolutely continuous ones motivated by the result in Back \cite{B}. We will prove  this in Theorem \ref{mm:t:AC} that the insider does not make any extra gain if she does not employ continuous strategies of finite variation. 

In view of the above discussion we can limit the admissible strategies of the insider to the absolutely continuous ones denoted by $\cA_c (H)$. Indeed, Theorem \ref{mm:t:AC} shows that under a mild condition the value function of the insider is unchanged if the insider is only allowed to use absolutely continuous strategies. Moreover, even if these conditions are not met, Theorem \ref{mm:t:AC} also demonstrates that an absolutely continuous strategy that brings the market price to the fundamental value of the asset is optimal within $\cA$.  In the models that we consider in the following chapters such a strategy will always exist and the equilibrium pricing rule will satisfy the conditions of Theorem \ref{mm:t:AC}. Consequently the restriction to $\cA_c$ is without loss of generality. 

Now we can formally define the market equilibrium as follows.
\begin{definition} \label{eqd} A couple $(H^{\ast} \theta^{\ast})$
	is said to form an equilibrium if $H^{\ast}$ is an admissible pricing rule,
	$\theta^{\ast} \in \cA_c(H^{\ast})$, and the following conditions are
	satisfied:
	\begin{enumerate}
		\item {\em Market efficiency condition:} given $\theta^{\ast}$,
		$H^{\ast}$ is a rational pricing rule, i.e. it satisfies (\ref{mm:d:mm_obj}).
		\item {\em Insider
			optimality condition:} given $H^{\ast}$, $\theta^{\ast}$ solves
		the insider optimization problem:
		\[
		\bbE[W^{\theta^{\ast}}_1] = \sup_{\theta \in \cA_c(H^{\ast})} \bbE [W^{\theta}_1].
		\]
	\end{enumerate}
\end{definition}
\section{On insider's optimal strategy}

Before showing that the strategies with discontinuous paths or with paths of infinite variation are suboptimal let us informally deduce the Hamilton-Jacobi-Bellmann (HJB) equation associated to the value function of the insider assuming absolutely continuous trading strategies. 

 Let $H$ be any rational pricing rule and suppose that $d\theta_t=\alpha_t dt$. First, notice that a standard application of
integration-by-parts formula applied to $W_1 ^\theta$  gives
\be \label{eq:w2} W_1 ^\theta = \int_0^1 (f(Z_1)-S_s) \alpha_s \,ds .\ee Furthermore,
\be \label{eq:cew2}
E^{0,z}\left[\int_0 ^1 (f(Z_1)-S_s) \alpha_s ds \right]=E^{0,z}\left[\int_0 ^1 (f(z)-S_s) \alpha_s ds \right].
\ee

In view of (\ref{eq:w2}) and (\ref{eq:cew2}), insider's optimization problem becomes
\begin{equation}\label{OptInsider}
\sup_{\theta } E^{0,z} [ W_1 ^\theta ] = \sup_{\theta }
E^{0,z}\left[\int_0 ^1 (f(z)- H(s, Y_s)) \alpha_s ds
\right].\end{equation}

Let us now introduce the value function of the insider:
\[
\phi(t,y,z) := \esssup_{\alpha} E^{0,z} \left[ \int_t ^1
(f(z)-H(s,Y_s))\alpha_sds | Y_t =y, Z = z \right], \quad t\in
[0,1]. \] 
Applying formally the dynamic programming principle, we get the following HJB equation:
\begin{equation}
0=\sup_{\alpha}\left(\left[ \phi_y
+f(z)-H(t,y)\right]\alpha \right)+\phi_t+\frac{1}{2} 
\phi_{yy}.
\end{equation}
Thus, for the finiteness of the value function and the existence of an optimal $\alpha$ we need
\begin{eqnarray}\label{mm:eq:wf_1}
\phi_y+f(z)-H(t,y)=0\\ \label{mm:eq:wf_2} \phi_t+\frac{1}{2}\phi_{yy}=0.
\end{eqnarray}
Differentiating (\ref{mm:eq:wf_1}) with respect to $y$
and since from (\ref{mm:eq:wf_1}) it follows that
$\phi_y=H(t,y)-f(z)$, we get 
\be \label{mm:eq:wf_4}
 \phi_{yy}=H_y(t,y), \; \phi_{yyy}=H_{yy}. \ee
 	 Since differentiation
(\ref{mm:eq:wf_1}) with respect to $t$ gives
$$\phi_{yt}=H_t(t,y),$$
(\ref{mm:eq:wf_4}) implies after differentiating (\ref{mm:eq:wf_2}) with respect to $y$ \be 
H_t(t,y)+\half H_{yy}(t,y)=0. \label{mm:e:pdeh} \ee

Thus, the last two equations seem to be necessary to have a finite solution to the insider's problem. 

The next result shows that the equation (\ref{mm:e:pdeh}) also implies the insider must use continuous strategies of finite variation. 
\begin{theorem} \label{mm:t:AC} Let  $H$ be an admissible pricing rule satisfying (\ref{mm:e:pdeh}).
	Then  $\theta \in \cA(H)$ is an optimal strategy if 
	\begin{enumerate}
		\item[i)] $\theta$ is continuous and of finite variation,
		\item[ii)]   and $H(1-,Y_{1-})=f(Z), \, P^{0,z}$-a.s..
	\end{enumerate}

	Moreover, if we further assume that $H $ is bounded, 
	then  there exists an admissible {\em absolutely continuous} strategy $\theta$ such that
	\[
	\sup_{\theta \in \mathcal{A}(H)}E^{0,z}\left[W_1^{\theta}\right]=E^{0,z}\left[W_1^{\theta}\right].
	\]
\end{theorem}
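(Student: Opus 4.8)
The plan is to run Back's verification argument. First I would introduce the candidate value function
\[
\phi(t,y,z):=\int_0^y\big(H(t,u)-f(z)\big)\,du-\half\int_0^t H_y(s,0)\,ds ,
\]
and check, by direct differentiation using (\ref{mm:e:pdeh}), that for each fixed $z$ one has $\phi\in C^{1,2}([0,1)\times\bbR)$, $\phi_y(t,y,z)=H(t,y)-f(z)$, $\phi_{yy}(t,y,z)=H_y(t,y)>0$, and $\phi_t+\half\phi_{yy}=0$. In particular $y\mapsto\phi(t,y,z)$ is strictly convex, with its minimum attained at the unique $y^{\ast}(z)$ solving $H(1-,y^{\ast}(z))=f(z)$, provided $f(z)$ lies in the range of $H(1-,\cdot)$; I would note that in the equilibria constructed later this always holds, and when it fails the insider's value is $+\infty$ so that the statement is vacuous.

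Next, fix $z$ and an arbitrary $\theta\in\cA(H)$ (recall $\theta_0=Y_0=0$). Integration by parts in (\ref{mm:eq:insW}), together with $S_{s-}=H(s,Y_{s-})$ and $f(Z)=f(z)$ under $P^{0,z}$, gives $W_1^{\theta}=f(z)\theta_{1-}-\int_0^{1-}H(s,Y_{s-})\,d\theta_s-[\theta,S]_{1-}$, hence $\int_0^{1-}\big(H(s,Y_{s-})-f(z)\big)\,d\theta_s=-W_1^{\theta}-[\theta,S]_{1-}$. I would then apply It\^o's formula to $\phi(t,Y_t,z)$ along $dY=d\theta+dB$: the $dt$-part vanishes by the backward heat equation (using $[Y^c]_t=[\theta^c]_t+2[\theta^c,B]_t+t$, so the $dt$-contribution of $\half\phi_{yy}\,d[Y^c]$ cancels $\phi_t\,dt$), and, after substituting the previous identity together with $[\theta,S]^c=\int H_y\,d[\theta^c]+\int H_y\,d[\theta^c,B]$, the $d[\theta^c,B]$-terms cancel, leaving
\[
W_1^{\theta}=\phi(0,0,z)-\phi(1-,Y_{1-},z)+\int_0^{1-}\big(H(s,Y_s)-f(z)\big)\,dB_s-R^{\theta},
\]
where $R^{\theta}=\half\int_0^{1-}H_y(s,Y_s)\,d[\theta^c]_s+\sum_{0<s\le 1}\rho_s\ge0$: the integral is non-negative since $H_y>0$, and each jump term $\rho_s$ is non-negative by the tangent-line inequality for the strictly convex function $y\mapsto\phi(s,y,z)$ at the point $Y_s$. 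The crucial point is that the $dB$-integral is a true $P^{0,z}$-martingale, its integrand lying in $L^2(ds\otimes P^{0,z})$ by the no-doubling bound (\ref{mm:e:theta_cond_2}). Taking $E^{0,z}$ then gives $E^{0,z}[W_1^{\theta}]=\phi(0,0,z)-E^{0,z}[\phi(1-,Y_{1-},z)]-E^{0,z}[R^{\theta}]$, which by $R^{\theta}\ge0$ and $\phi(1-,\cdot,z)\ge\phi(1-,y^{\ast}(z),z)$ is $\le\phi(0,0,z)-\phi(1-,y^{\ast}(z),z)$, a bound independent of $\theta$.

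To close the first assertion I would note that if $\theta$ satisfies (i) it is continuous, hence jump-free ($\sum_s\rho_s=0$), and being of finite variation it has no continuous martingale part ($[\theta^c]\equiv0$), so $R^{\theta}=0$; and if it satisfies (ii), injectivity of $H(1-,\cdot)$ forces $Y_{1-}=y^{\ast}(z)$ $P^{0,z}$-a.s., so $E^{0,z}[\phi(1-,Y_{1-},z)]=\phi(1-,y^{\ast}(z),z)$. Then $E^{0,z}[W_1^{\theta}]=\phi(0,0,z)-\phi(1-,y^{\ast}(z),z)$ equals the upper bound, so $\theta$ is optimal over $\cA(H)$. For the ``moreover'' part, with $H$ bounded I would take the absolutely continuous strategy $d\theta_t=\alpha_t\,dt$, $\alpha_t=\big(y^{\ast}(z)-Y_t\big)/(1-t)$, under which $Y$ is a Brownian bridge from $0$ to $y^{\ast}(z)$ under $P^{0,z}$; then $\theta$ is continuous of finite variation, $Y_{1-}=y^{\ast}(z)$ a.s. so (ii) holds, and $E^{0,z}\big[\int_0^1 H^2(t,Y_t)\,dt\big]\le\|H\|_\infty^2<\infty$, so $\theta\in\cA(H)$ (indeed $\theta\in\cA_c(H)$); by the first part this $\theta$ is optimal, which is the asserted identity.

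I expect the main obstacle to be the It\^o-formula bookkeeping in the second step: arranging the jump and quadratic-variation corrections so that the $d[\theta^c,B]$ cross-terms cancel and every surviving non-martingale term has a definite sign, and, inseparably, checking that the $dB$-integral is a true rather than a strict local martingale — which is precisely where the no-doubling condition (\ref{mm:e:theta_cond_2}) is used, and without which the expectation identity and hence the whole comparison would break. The remaining pieces — that $\phi$ solves the PDE, the tangent-line estimate for $\rho_s$, and the well-posedness and terminal behaviour of the bridge strategy — are routine and I would only sketch them.
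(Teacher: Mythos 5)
Your proposal is correct and follows essentially the same route as the paper's proof: a verification argument built on a potential function with $\phi_y=H-f(z)$ solving the backward heat equation (your $\phi$ differs from the paper's $\Psi^a$ only by an additive function of $(t,z)$, so the terminal lower bound $\phi(1-,\cdot,z)\ge\phi(1-,y^{\ast}(z),z)$ is the paper's $\Psi^{f(z)}(1-,\cdot)\ge0$ in disguise), the same pathwise decomposition of $W_1^{\theta}$ isolating the non-negative quadratic-variation, jump, and terminal terms, and the same Brownian-bridge strategy to $H^{-1}(1,f(z))$ for attainment. The only (welcome) additions are your explicit justification that the $dB$-integral is a true martingale via (\ref{mm:e:theta_cond_2}) and the remark about $f(z)$ lying in the range of $H(1-,\cdot)$, both of which the paper leaves implicit.
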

The proof of this theorem and construction of the equilibrium in the Kyle model will be given after we collect some machinery. 
As seen from the statement the insider must use a {\em bridge strategy} when trading optimally. This requires certain results  on the {\em conditioning of diffusion processes}. While doing so we shall also pay attention to its connection with the {\em theory of enlargement of filtrations} since the insider's filtration is the enlargement of the filtration of market makers. Also note that the rationality of market makers' pricing rule means they  need to compute the conditional distribution of $Z$ given the history of $Y$. Thus, we will need to review the basics of {\em stochastic filtering}, at least in the context of Gaussian processes. 
\section{Stochastic filtering}
Suppose that we are given on a filtered probability space  an
adapted process of interest, $X=(X_t)_{0 \leq t \leq T}$, called the
{\em signal process}, for a deterministic $T$. The problem is that
the signal cannot be observed directly and all we can see is an
adapted {\em observation process} $Y=(Y_t)_{0 \leq t \leq T}$. The
filtering is concerned with finding $\bbE[f(X_t)|\cF^Y_t]$, where
$\cF^Y$ is the minimal filtration generated by $Y$ and satisfying
the usual hypotheses, and $f$ is a measurable function.
\begin{remark} There is a problem with the definition of the process
	$(\bbE[f(X_t)|\cF^Y_t])_{0 \leq t \leq T}$ as the conditional
	expectation $\bbE[f(X_t)|\cF^Y_t]$ is only defined $a.s.$ and there
	are uncountably many $t$ between $0$ and $T$! However, there exists
	a process, let's denote it with $f^o$, called the {\em
		$\cF^Y$-optional projection} of $f(X)$, which satisfies
	$f^o_t=\bbE[f(X_t)|\cF^Y_t]$, for every $t$ (and some more).
	Moreover, $f^o$ is uniquely defined. Thus, whenever we define a
	process by $(\bbE[H_t|\cF^Y_t])_{0 \leq t \leq T}$, we shall always
	mean, and use, the $\cF^Y$-optional projection of $H$. The
	$F^Y$-optional projection of $H$ will be denoted with $\widehat{H}$.
	See the second volume of Rogers and Williams for more details. We
	also suppose the filtration supports two Brownian motions, $W$ and
	$B$, such that $d[W,B]_t=\rho_t dt$, for some predictable process
	$\rho$.
\end{remark}
\subsection{The innovations approach to nonlinear filtering}
Let's suppose the observation process is of the form \be \label{why}
Y_t=\int_0^t h_s ds + W_t, \ee where $W$ is a standard Brownian
motion and $h$ is an adapted process such that \be
\label{L2h} \bbE\left(\int_0^T h_s^2 ds\right) < \infty. \ee The
main result of nonlinear filtering theory is the following:
\begin{theorem} \label{inn} {\bf (Fujisaki, Kallianpur and Kunita)}
	\begin{enumerate}
		\item The process $N$ defined by
		\be \label{defN} N_t=Y_t - \int_0^t \widehat{h}_sds, \ee for each $t
		\in [0,T]$, is an $\cF^Y$-Brownian motion.
		\item If $M$ is an $L^2$-bounded $\cF^Y$-martingale with $M_0=0$,
		then there exists an $\cF^Y$-predictable process $C$ such that
		\[
		\bbE\left(\int_0^T C_s^2 ds\right) < \infty,
		\]
		and that
		\[
		Z_t = \int_0^t C_s dN_s.
		\]
	\end{enumerate}
\end{theorem}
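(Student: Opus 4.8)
The plan is to prove the two assertions separately.

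\emph{Assertion (1).} Since $\bbE\int_0^T h_s^2\,ds<\infty$, Jensen's inequality and Cauchy--Schwarz give $\bbE\int_0^T|h_s|\,ds<\infty$ and $\bbE\int_0^T|\widehat h_s|\,ds<\infty$, so $A:=\int_0^\cdot\widehat h_s\,ds$ is a well-defined continuous process of integrable variation and $N=Y-A$ is continuous, $\cF^Y$-adapted and null at $0$. To show that $N$ is an $\cF^Y$-martingale I would fix $s<t$ and a bounded $\cF^Y_s$-measurable $\xi$: because $\cF^Y_s\subseteq\cF_s$ and $W$ is an $\cF$-Brownian motion, $\bbE[\xi(W_t-W_s)]=0$, whereas Fubini's theorem together with $\widehat h_u=\bbE[h_u\,|\,\cF^Y_u]$ and $\xi\in\cF^Y_u$ for $u\ge s$ yields
\[
\bbE\Big[\xi\int_s^t h_u\,du\Big]=\int_s^t\bbE[\xi h_u]\,du=\int_s^t\bbE[\xi\,\widehat h_u]\,du=\bbE\Big[\xi\int_s^t\widehat h_u\,du\Big],
\]
and subtracting gives $\bbE[\xi(N_t-N_s)]=0$. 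Since $N=W+\int_0^\cdot(h_u-\widehat h_u)\,du$ differs from $W$ by a continuous process of finite variation, $[N]_t=[W]_t=t$, and L\'evy's characterisation identifies $N$ as an $\cF^Y$-Brownian motion.

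\emph{Assertion (2).} The heart of the matter is that $N$ has the predictable representation property for the filtration $\cF^Y$; this is genuinely a statement about $\cF^Y$ and cannot be deduced from Brownian martingale representation for $\cF^N$, since in general $\cF^N\subsetneq\cF^Y$. I would argue via a reference probability measure, assuming first that $h$ is bounded. Then $\Lambda:=\exp\big(-\int_0^\cdot h_s\,dW_s-\half\int_0^\cdot h_s^2\,ds\big)$ is a true $\cF$-martingale (Novikov), and under $\bbP^0$ with $d\bbP^0/d\bbP=\Lambda_T$ Girsanov's theorem makes $Y$ an $\cF$-Brownian motion, hence an $\cF^Y$-Brownian motion; since $\bbP^0\sim\bbP$ the filtration $\cF^Y$ is also the $\bbP^0$-augmented natural filtration of $Y$, so the classical Brownian martingale representation theorem applies under $\bbP^0$. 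Given an $L^2(\bbP)$-bounded $\cF^Y$-martingale $M$ with $M_0=0$, set $L_t:=\bbE_{\bbP^0}[\Lambda_T^{-1}\,|\,\cF^Y_t]$, the (continuous, strictly positive) density process of $\bbP$ over $\bbP^0$ on $\cF^Y$; by Bayes' rule $ML$ is a $\bbP^0$-$\cF^Y$-martingale. Representing $L$ and $ML$ as stochastic integrals against $Y$ under $\bbP^0$, applying It\^o's formula to $M=(ML)/L$, and substituting $dY_t=dN_t+\widehat h_t\,dt$, one obtains $dM_t=C_t\,dN_t+(\text{absolutely continuous part})$ with $C$ an explicit $\cF^Y$-predictable integrand; but $M$ and $\int_0^\cdot C_s\,dN_s$ are both $\bbP$-$\cF^Y$-local martingales, so their difference is a continuous finite-variation local martingale null at $0$ and therefore vanishes, giving $M_t=\int_0^t C_s\,dN_s$ and, by It\^o's isometry, $\bbE\int_0^T C_s^2\,ds=\bbE[M_T^2]<\infty$. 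Equivalently, one could first perform the Galtchouk--Kunita--Watanabe decomposition $M=\int C\,dN+L$ with $L$ strongly orthogonal to $N$ and then use the preceding to conclude $L\equiv0$.

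\emph{Main obstacle.} The delicate part is dispensing with the boundedness of $h$: under only $\bbE\int_0^T h_s^2\,ds<\infty$ the exponential $\Lambda$ need not be a genuine martingale. I would truncate, $h^{(n)}:=(h\wedge n)\vee(-n)$, solve the representation problem for each truncated observation, and pass to the limit using $h^{(n)}\to h$ in $L^2(dt\otimes d\bbP)$; controlling the resulting convergence of the optional projections $\widehat{h^{(n)}}\to\widehat h$, of the innovation processes, and of the integrands in $L^2$ --- and arranging the localisation so that no integrability on $\Lambda$ itself is required --- is where the real work lies. The remaining ingredients (the martingale and L\'evy arguments of Assertion (1), the Girsanov computation, It\^o's formula and the isometry) are routine.
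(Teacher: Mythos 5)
Your proof of part (1) is correct and is in substance the paper's own argument: the paper checks the martingale property of $N$ by optional stopping at bounded $\cF^Y$-stopping times (after a domination argument for integrability), whereas you check it against bounded $\cF^Y_s$-measurable test variables via Fubini and the tower property; both proofs then conclude from $[N,N]_t=t$ and L\'evy's characterisation. The difference is cosmetic.

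For part (2) the paper gives no proof at all --- it simply cites Rogers and Williams, Chapter VI.8 --- so your attempt must be judged against the standard arguments in the literature. Your reference-measure scheme for \emph{bounded} $h$ (Girsanov under $\bbP^0$ making $Y$ an $\cF^Y$-Brownian motion, Brownian martingale representation there, Bayes' rule and It\^o to return to $\bbP$, and the observation that $M-\int_0^\cdot C_s\,dN_s$ is a continuous finite-variation local martingale, hence zero) is a legitimate and standard route, and your remark that the result cannot be obtained from representation in $\cF^N$ because possibly $\cF^N\subsetneq\cF^Y$ is exactly the right caution. The genuine gap is the unbounded case, which you flag but whose proposed remedy does not work as described: truncating $h$ replaces the observation by $Y^{(n)}=\int_0^\cdot h^{(n)}_s\,ds+W$, whose filtration $\cF^{Y^{(n)}}$ is in general different from $\cF^Y$, so representing martingales of the truncated observation says nothing directly about the given $\cF^Y$-martingale $M$, and there is no evident way to pass to the limit across these changing filtrations. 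The standard way around this exploits the structure you already have from part (1): the innovations decomposition $Y=N+\int_0^\cdot\widehat h_s\,ds$ exhibits $Y$ as a diffusion-type process with an $\cF^Y$-\emph{adapted} drift $\widehat h$, so one can localise with the $\cF^Y$-stopping times $\tau_n=\inf\{t:\int_0^t\widehat h_s^2\,ds\geq n\}$ and perform the change of measure with the projected drift $\widehat h$ rather than with $h$ (this is the representation theorem for diffusion-type processes in Liptser and Shiryaev, and is essentially what Rogers and Williams do); no integrability of the exponential of $h$ itself is ever needed. As written, your part (2) is an incomplete sketch whose completion requires this extra idea rather than the truncation of $h$.
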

The $F^Y$-Brownian motion $N$ is called the {\em innovation process}
in filtering literature.

\begin{proof} Let $S$ be an $\cF^Y$ stopping time. Since we only
	observe $Y$ over the finite interval $[0,T]$, $S \leq T$, hence
	bounded. Let $N^{\ast}_T=\sup_{t \leq T}|N_t|$. Note that
	$N^{\ast}_T$ is dominated by the random variable
	\[
	W^{\ast}_T + \int_0^T\left\{|h_s|+|\widehat{h}_s|\right\}ds,
	\]
	which is square integrable due to (\ref{L2h}). Thus, $N_S$ is also
	integrable and \bean \bbE(N_S)&=&\bbE\left\{W_S +
	\int_0^S(h_s-\widehat{h}_s)ds\right\} \\
	&=&\int_0^T\bbE\left[(h_s-\widehat{h}_s)1 _{[s \leq S]}\right]ds=0,
	\eean where we used the optional stopping theorem for $\bbE[W_S]$ in
	order to get the first equality, the fact that
	$\widehat{h}_s=\bbE[h_s|\cF^Y_s]$ and $[s \leq S] \in \cF^Y_s$ to
	arrive at the last equality. This shows $N$ is an
	$\cF^Y$-martingale. Since $[N,N]_t=t$, for every $t \in [0,T]$, this
	shows $N$ is an $\cF^Y$-Brownian motion by L\'evy's
	characterisation. See Rogers and Williams \cite{RW2} Chapter VI.8 for the proof
	of the second part.
\end{proof}

Let the signal process X have the following differential: \be
\label{signal} dX_t=\alpha_t dt + \eta_t dB_t, \ee where $\alpha$ is
adapted and $\eta$ is predictable. We further suppose
\[
\bbE\left(\int_0^T \alpha_s^2 ds\right) < \infty,
\]
and
\[
\sup_{t \in [0,T]} \bbE X_t^2 <\infty.
\]

\begin{theorem} \label{filter} Let $X$ and $Y$ be as above. Then we have the
	following fitering equations:
	\[
	\widehat{X}_t= \widehat{X}_0+ \int_0^t \widehat{\alpha}_s ds + \int_0^t
	\left\{\widehat{X_s h_s} -\widehat{X}_s\widehat{h}_s + \widehat{\eta_s
		\rho_s}\right\}dN_s.
	\]
\end{theorem}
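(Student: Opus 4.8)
The plan is to derive the filtering equation by combining the Fujisaki--Kallianpur--Kunita representation (Theorem \ref{inn}) with an identification of the integrand via an inner-product (correlation) computation. First I would establish that the optional projection $\widehat{X}$ is a special semimartingale in the filtration $\cF^Y$. To this end, write the Doob--Meyer-type decomposition that follows by taking $\cF^Y$-optional projections term by term in (\ref{signal}): the process $A_t := \widehat{X}_t - \widehat{X}_0 - \int_0^t \widehat{\alpha}_s\,ds$ should be shown to be an $\cF^Y$-martingale. The key point is that for a bounded $\cF^Y$-stopping time $\tau$ one has, using the definition of optional projection and Fubini, $\bbE[\widehat{X}_\tau] = \bbE[X_\tau] = \bbE[X_0] + \bbE\int_0^\tau \alpha_s\,ds = \bbE[\widehat{X}_0] + \bbE\int_0^\tau \widehat{\alpha}_s\,ds$, and more generally the same computation conditioned on $\cF^Y_s$ for $s\le t$ shows $A$ has the martingale property; the integrability needed here comes from $\sup_t \bbE X_t^2 <\infty$ and $\bbE\int_0^T\alpha_s^2\,ds<\infty$. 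One checks $A$ is $L^2$-bounded.

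Next, apply part (2) of Theorem \ref{inn} to the $L^2$-bounded $\cF^Y$-martingale $A$: there is an $\cF^Y$-predictable process $C$ with $\bbE\int_0^T C_s^2\,ds<\infty$ and $A_t = \int_0^t C_s\,dN_s$, where $N$ is the innovation process from (\ref{defN}). So it remains only to identify $C$. The idea is to compute, for an arbitrary bounded $\cF^Y$-predictable $G$, the quantity $\bbE\big[\big(\int_0^t G_s\,dN_s\big)\big(X_t - \int_0^t\alpha_s\,ds\big)\big]$ in two ways. On one hand, since $X_t - \int_0^t \alpha_s\,ds$ differs from its projection $A_t + \widehat X_0$ by something orthogonal to $\cF^Y$-measurable variables, this equals $\bbE\big[\big(\int_0^t G_s\,dN_s\big)A_t\big] = \bbE\int_0^t G_s C_s\,ds$ by Itô isometry. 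On the other hand, write $X_t - \int_0^t\alpha_s\,ds = X_0 + \int_0^t \eta_s\,dB_s$ and $\int_0^t G_s\,dN_s = \int_0^t G_s\,dY_s - \int_0^t G_s\widehat h_s\,ds = \int_0^t G_s h_s\,ds + \int_0^t G_s\,dW_s - \int_0^t G_s\widehat h_s\,ds$; expanding the product and taking expectations, the cross terms involving the stochastic integrals contribute $\bbE\int_0^t G_s \eta_s\rho_s\,ds$ (from $d[W,B]_s=\rho_s\,ds$), while the term $(\int_0^t G_s(h_s-\widehat h_s)\,ds)(X_0+\int_0^t\eta_s\,dB_s)$ must be handled by conditioning: using that $G$ is $\cF^Y$-predictable one replaces the product $G_s$ times $X_u$-type quantities by their projections and, after an application of Fubini and the tower property, obtains $\bbE\int_0^t G_s(\widehat{X_s h_s} - \widehat X_s \widehat h_s)\,ds$. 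Equating the two expressions and using that $G$ is arbitrary yields $C_s = \widehat{X_s h_s} - \widehat X_s\widehat h_s + \widehat{\eta_s\rho_s}$ for a.e.\ $s$, which is exactly the claimed integrand.

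The main obstacle I expect is the bookkeeping in this correlation computation: specifically, justifying the manipulation of $\bbE\big[ G_s (h_s - \widehat h_s) \cdot \big(X_0 + \int_0^u \eta_r\,dB_r\big)\big]$ — one has to be careful about which times are being conditioned, to invoke Fubini legitimately (the $L^2$ bounds are what make this work), and to recognize that $\bbE[\,\cdot\,(h_s-\widehat h_s)\mid\cF^Y_s]$ turns $X_s$ into $\widehat X_s$ only after the conditioning is placed correctly. A cleaner alternative would be to use the Clark--Ocone / martingale-representation structure directly, applying integration by parts to $\widehat X_t N_t$-type products, but the correlation argument above is the most transparent route. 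Once $C$ is identified, assembling the statement is immediate: $\widehat X_t = \widehat X_0 + \int_0^t \widehat\alpha_s\,ds + A_t = \widehat X_0 + \int_0^t\widehat\alpha_s\,ds + \int_0^t\big(\widehat{X_sh_s} - \widehat X_s\widehat h_s + \widehat{\eta_s\rho_s}\big)\,dN_s$.
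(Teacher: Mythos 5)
Your overall architecture coincides with the paper's: first show that $A_t=\widehat X_t-\widehat X_0-\int_0^t\widehat\alpha_s\,ds$ is an $L^2$-bounded $\cF^Y$-martingale by taking optional projections (the paper isolates this as the statement (\ref{mg})), then invoke part (2) of Theorem \ref{inn} to write $A=\int C\,dN$, and finally identify $C$ by a duality computation. The only structural difference is in the identification step: the paper computes the optional projection of $X_tY_t$ in two ways and uses that a finite-variation $\cF^Y$-martingale vanishes, whereas you test against $\int_0^t G_s\,dN_s$ for an arbitrary bounded predictable $G$ and equate expectations. These two devices are equivalent in substance (the paper's is your computation with the test integrator $\int G\,dN$ replaced by $Y$).

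There is, however, a concrete flaw in your execution of the correlation step. You assert that $X_t-\int_0^t\alpha_s\,ds$ differs from $\widehat X_0+A_t$ by something orthogonal to $\cF^Y_t$-measurable variables. This is false in general: the $\cF^Y_t$-projection of $V_t:=\int_0^t\alpha_s\,ds$ is \emph{not} $\int_0^t\widehat\alpha_s\,ds$ --- by the very lemma that makes $A$ a martingale, $\bbE[V_t\mid\cF^Y_t]=\int_0^t\widehat\alpha_s\,ds+m_t$ for a generally nonzero $\cF^Y$-martingale $m=\int D\,dN$ (take $\alpha_s\equiv X_0$ correlated with the observations to see $m\neq0$). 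So your first computation actually yields $\bbE\int_0^tG_s(C_s-D_s)\,ds$, and your second computation, carried out carefully, produces the matching extra cross term $-\bbE\int_0^tG_s(\widehat{V_sh_s}-\widehat V_s\widehat h_s)\,ds$, which you drop; identifying that term with $D$ is itself an instance of the theorem applied to the signal $V$, so the argument as written is circular. The repair is easy: do not subtract $\int_0^t\alpha_s\,ds$. Test $I_t=\int_0^tG_s\,dN_s$ against $X_t$ itself. Integration by parts on $I\widehat X$ in $\cF^Y$ gives $\bbE[I_t\widehat X_t]=\bbE\int_0^tI_s\widehat\alpha_s\,ds+\bbE\int_0^tG_sC_s\,ds$; integration by parts on $IX$ in the large filtration gives $\bbE[I_tX_t]=\bbE\int_0^tI_s\alpha_s\,ds+\bbE\int_0^tG_sX_s(h_s-\widehat h_s)\,ds+\bbE\int_0^tG_s\eta_s\rho_s\,ds$; since $I_s$ and $G_s$ are $\cF^Y_s$-measurable, the drift terms agree and the remaining terms project to $\widehat{X_sh_s}-\widehat X_s\widehat h_s+\widehat{\eta_s\rho_s}$. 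With that modification your proof is correct and is essentially the paper's argument.
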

\begin{proof}
	First notice that if $C$ is an adapted process such that
	\[
	\bbE \int_0^T |C_s|ds < \infty,
	\]
	and if $V_t= \int_0^t C_s ds$, then \be \label{mg} \what{V}_t -
	\int_0^t \what{C}_s ds \mbox{ is an $\cF^Y$-martingale.} \ee In
	order to prove this it suffices to prove for any $\cF^Y$-stopping
	time $S \leq T$, $\bbE\what{V}_S = \bbE \int_0^S \what{C}_s ds.$
	Indeed, \bean
	\bbE\what{V}_S=\bbE V_S&=&\bbE \int_0^S C_s ds \\
	&=&\int_0^T \bbE\left[1_{[s \leq S]}C_s\right] ds \\
	&=&\int_0^T \bbE\left[1_{[s \leq S]}\what{C}_s\right] ds\\
	&=&\bbE \int_0^S \what{C}_s ds . \eean This in turn implies
	\[
	M_t:=\what{X}_t - \what{X}_0 - \int_0^t \what{\alpha}_s ds,
	\]
	is an $\cF^Y$ martingale with $M_0=0$. Moreover, it is an
	$L^2$-bounded martingale due to the assumed integrability conditions
	on $\alpha$ and $X$. Thus, by Theorem \ref{inn} there exists a
	predictable process $\phi$ such that
	\[
	M_t=\int_0^t \phi_s dN_s.
	\]
	Next, we shall calculate the martingale $M$ explicitly. In order to
	do this we will calculate the optional projection of $XY$ in two
	different ways.  Using integration by parts \bean X_t Y_t&=&
	\int_0^t X_s dY_s + \int_0^t Y_s dX_s +[X,Y]_t \\
	&=&\int_0^t \left\{X_s h_s + Y_s \alpha_s+ \eta_s\rho_s\right\}ds +
	\mbox{ $\cF$-martingale}. \eean Therefore, using (\ref{mg}) \be
	\label{fe1} \what{X_t Y_t}=\what{X_t}Y_t=\int_0^t \left\{\what{X_s
		h_s} + Y_s \what{\alpha_s}+ \what{\eta_s\rho_s}\right\}ds + \mbox{
		$\cF^Y$-martingale}. \ee
	The other way is the following:
	\bea
	\what{X}_t Y_t &=& \int_0^t \what{X}_t dY_s + \int_0^t Y_s
	d\what{X}_s +
	[\hat{X},Y]_t  \nn \\
	&=&\int_0^t \what{X}_t \left\{dN_s +\what{h}_sds\right\} +\int_0^t
	Y_s\left\{dM_s + \what{\alpha}_sds\right\} + [M,N]_t \nn
	\\  &=&\int_0^t \left\{\what{X}_s \what{h}_s + Y_s \what{\alpha}_s +
	\phi_s\right\}ds  + \mbox{ $\cF^Y$-martingale} \label{fe2} \eea
	
	(\ref{fe1}) and (\ref{fe2}) together imply
	\[
	\int_0^t \left\{\what{X}_s \what{h}_s + Y_s \what{\alpha}_s +
	\phi_s\right\}ds - \int_0^t \left\{\what{X_s h_s} + Y_s
	\what{\alpha_s}+ \what{\eta_s\rho_s}\right\}ds
	\]
	is an $\cF^Y$-martingale, thus, must be zero being, of finite
	variation. This implies
	\[
	\phi_s = \what{X_s h_s} - \what{X}_s \what{h}_s +
	\what{\eta_s\rho_s},
	\]
	for each $s$. This proves the desired filtering equation.
\end{proof}
\subsection{The Markov case}
Observe that we have not made any Markov assumption on $X$ or $Y$. We
will now take a look at this special case and obtain equations that
determine the conditional distribution of $X$.

Let's suppose $X$ is a diffusion with generator $L$:
\[
L=\frac{1}{2}\sigma^2(x)\frac{d^2}{dx^2} +b(x) \frac{d}{dx}.
\]
We will also suppose that $B$ and $W$ are independent for simplicity
and  $h_s=h(X_s)$ and
\[
\bbE\left[\int_0^T h_s^2\,ds\right]<\infty.
\]
Then, using the already obtained formulas we obtain the following
\begin{theorem} Let $f \in C^2_K$ and define $\pi_t
	f=\bbE[f(X_t)|\cF^Y_t]$. Then,
	\[
	\pi_tf =\pi_0f + \int_0^t \pi_s Lf \, ds +\int_0^t \{\pi_s hf
	-\pi_sh \pi_s f\}\,dN_s.
	\]
\end{theorem}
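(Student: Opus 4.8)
The plan is to specialise the general filtering equation from Theorem \ref{filter} to the Markov diffusion setting by testing against $f \in C^2_K$. I would begin by recording the obvious fact that $\widehat{X_t}$ in Theorem \ref{filter} was nothing but the optional projection of the identity function applied to $X$; the content of the present theorem is that the same argument goes through verbatim when the identity map $x \mapsto x$ is replaced by an arbitrary $f \in C^2_K$, provided one first applies It\^o's formula to $f(X_t)$ to put it into the semimartingale form \eqref{signal} required by that theorem.

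Concretely, the first step is to apply It\^o's formula to $f(X_t)$ using $dX_t = b(X_t)\,dt + \sigma(X_t)\,dB_t$, which gives
\[
df(X_t) = Lf(X_t)\,dt + f'(X_t)\sigma(X_t)\,dB_t,
\]
so that $f(X)$ plays the role of the signal process in Theorem \ref{filter} with drift $\alpha_t = Lf(X_t)$ and diffusion coefficient $\eta_t = f'(X_t)\sigma(X_t)$. The second step is to check the integrability hypotheses of Theorem \ref{filter} for this new signal: since $f \in C^2_K$ has compact support, $f$, $Lf$ and $f'\sigma$ are all bounded, so $\sup_{t}\bbE\, f(X_t)^2 < \infty$ and $\bbE\int_0^T (Lf(X_s))^2\,ds < \infty$ hold trivially, and the assumed $\bbE\int_0^T h_s^2\,ds < \infty$ is already in force. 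The third step is to substitute into the conclusion of Theorem \ref{filter}: with $\rho \equiv 0$ by the independence of $B$ and $W$, the correction term $\widehat{\eta_s \rho_s}$ vanishes, and writing $\pi_t g = \bbE[g(X_t)\mid \cF^Y_t]$ for the optional projection we obtain
\[
\pi_t f = \pi_0 f + \int_0^t \pi_s (Lf)\,ds + \int_0^t \{\pi_s(hf) - \pi_s h \,\pi_s f\}\,dN_s,
\]
where I have used $\widehat{f(X_s)h(X_s)} = \pi_s(hf)$ and $\widehat{h}_s = \pi_s h$. This is exactly the asserted Kushner--Stratonovich equation.

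The only genuine subtlety — and the step I would be most careful about — is the identification $\widehat{h}_s = \pi_s h$ together with the interpretation of all these optional projections as honest processes rather than $a.s.$-defined random variables; but this is precisely what the Remark preceding Theorem \ref{inn} licenses, and for $h_s = h(X_s)$ with $\bbE\int_0^T h_s^2\,ds < \infty$ there is no integrability gap. One should also note that $x \mapsto x$ is not in $C^2_K$, so Theorem \ref{filter} is not literally a special case of this one; the logical direction is that the present theorem is deduced \emph{from} Theorem \ref{filter} by the localisation-free argument above, exploiting compact support to kill all integrability worries in one stroke. If one wanted the equation for general (not compactly supported) $f$ in the domain of $L$, a standard localisation argument with cutoff functions would be needed, but that is beyond the stated claim.
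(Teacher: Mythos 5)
Your derivation is exactly the route the paper intends: it states the theorem as an immediate consequence of Theorem \ref{filter} ("using the already obtained formulas"), and your proof supplies precisely the missing details — It\^o's formula to recast $f(X)$ as a signal with drift $Lf(X)$ and diffusion coefficient $f'\sigma$, the vanishing of the correlation term since $\rho\equiv 0$, and the integrability checks that are trivial because $f\in C^2_K$. The argument is correct and matches the paper's approach.
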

The equation in the above theorem is called {\em Kushner-Stratonovic
	equations} or simply {\em filtering equations}. 
\begin{exercise} Suppose that $W$ and $B$ are not independent but
	$d[W, B]_t=\rho(X_t, Y_t)dt$ for some measurable function
	$\rho(x,y)$ which is bounded when $x$ belongs to a bounded interval. Obtain the filtering equations in this setting.
\end{exercise}
\begin{exercise} Extend the filtering equations to a multidimensional
	setting. 
\end{exercise}
\subsection{Kalman-Bucy filter}
Kalman-Bucy filter is a celebrated example of filtering which finds
widespread use in real-world problems. In particular, it will be essential for the solution of the equilibirum in Kyle model. 

We assume the signal process
satisfies an Ornstein-Uhlenbeck SDE:
\[
X_t=X_0 + B_t + \int_0^t a X_s ds,
\]
where $X_0$ is a normal random variable, and the observation process
is given by
\[
Y_t=W_t + \int_0^t c X_s ds.
\]
We assume $W$ and $B$ are independent so that $\rho \equiv 0$. Since
the bivariate process $(X,Y)$ is Gaussian, the conditional
distribution of $X$ given $Y$ is also Gaussian. The mean is given by
$\what{X}$ which is given by
\[
\what{X}_t= \bbE X_0 + \int_0^t a \what{X}_s ds + c\int_0^t
\left\{\what{X^2_s}-(\what{X}_s)^2\right\}dN_s \] by Theorem
\ref{filter}. Let $v_t:=\bbE[(X_t -\what{X}_t)^2|\cF^Y_t]$ be the
conditional variance of $X_t$ given $\cF^Y_t$. I.e. $v_t
=\what{X^2_t}-(\what{X}_t)^2$. Next, let's find the filtering
equation for $\what{X^2_t}$. Again, using It\^o's formula and
Theorem \ref{filter} \[
\what{X^2_t}= \bbE X^2_0
+ \int_0^t (1+2 a \what{X^2}_s) ds + c\int_0^t
\left\{\what{X^3_s}-\what{X^2_s}\what{X}_s\right\}dN_s. \] Recall
that for $Z \sim N(\mu, \sigma^2)$, $\bbE Z^3=\mu(\mu^2+3
\sigma^2)$. Thus, since $X_t$ is conditionally Gaussian,
\[
\what{X^3_s}-\what{X^2_s}\what{X}_s=\what{X}_s\left(\what{X^2_s} +3
v_s -\what{X^2_s}\right)= 2 v_s \what{X}_s.
\]
Thus, \bean
dv_t&=&d(\what{X^2_t}-\what{X}_t^2) \\
&=&2 c v_t \what{X}_t dN_t + (1+2 a \what{X^2_t})dt - 2
\what{X}_t(cv_tdN_t+ a \what{X}_tdt)-c^2v_t^2dt\\
&=&(1+ 2 a v_t - c^2 v_t^2)dt, \eean so that $v$ solves an ordinary
differential equation. This differential equation has a solution. If
$\beta>0$ and $-\gamma<0$ are two roots of the quadratic $1+2 a x -
c^2 x^2$, and if $\lambda=c^2(\beta+\gamma)$, then \bean
v_t&=&\frac{\delta \beta e^{\lambda t}-\gamma}{\delta e^{\lambda
		t}+1}, \qquad \mbox{where} \\
\delta&=&\frac{\sigma^2 + \gamma}{\beta-\sigma^2}, \eean and
$\sigma^2=\mbox{var}(X_0)$. Note that $v(\infty)=\beta$.
\begin{exercise} \label{f:ex:kyle}
	Let $X$ be the unobserved signal given by
	\[
	X_t= X_0 + \int_0^t \sigma(s)d W_s, \qquad \forall t \in [0,1],
	\]
	where $X_0$ is a normal random variable with mean zero and variance
	$s(0)$, and $\sigma$ is a continuous and deterministic function such
	that
	\[
	\int_0^1 \sigma^2(s)ds < \infty.
	\]
	There also exists an observation process $Y$ given by
	\[
	Y_t=B_t + \int_0^t \frac{X_s - Y_s}{f(s)}ds,
	\]
	where $f$ is a deterministic continuous function. Suppose $B$ and
	$W$ are independent Brownian motions.
	\begin{itemize}
		\item[a)] Find the  equation for $\hat{X}$ given the observation process
		$Y$. 
		\item[b)] Let $v(t):=\bbE[(X_t-\widehat{X}_t)^2|\cF^Y_t]$ where $\cF^Y$ is the filtration generated by $Y$ and $\widehat{X}_t=\bbE[X_t|\cF^Y_t]$. Show that $v$ solves the differential equation
		\[
		f^2(t) v'(t) + v^2(t)= \sigma^2(t)f^2(t).
		\]
		(Hint: If $Z \sim N(\mu, \sigma^2)$ then $\bbE Z^3=\mu(\mu^2+3
		\sigma^2)$.)
		\item[c)] Let $s(t):=s(0) + \int_0^t \sigma^2(s)ds$. Suppose
		$f(t)=s(t)-t$ and $s(t)-t\geq 0$. Using the differential equation above show that,
		given $\cF^Y_t$, $X_t$ is a normal random variable with mean
		$\widehat{X}_t$ and variance $s(t)-t$. Also show that  $\widehat{X}$ is a Brownian motion.
		\end{itemize}
\end{exercise}
Finally, Liptser and Shiryaev \cite{ls} is an excellent source for the fundamentals of stochastic filtering. It includes the results above and many more!
\section{Diffusion bridges} \label{s:db}
By a {\em diffusion bridge} we usually understand a conditioning of a given diffusion process to arrive at a fixed given value at some future time. The most well-known of such bridges is the {\em Brownian bridge}. More precisely, if $B$ is a standard Brownian motion and $x \in \bbR$,  we can construct a process $X^x$ such that the distribution of $X^x$ is that of $B$ conditioned on $B_1=x$. Note that since $P(B_1=x)=0$, the law of Brownian bridge is {\em not} absolutely continuous to that of $B$.  However, we shall see that
\[
\mbox{Law}(X^x_s; s\leq t) \sim \mbox{Law}(B_s; s\leq t), \qquad \forall t<1.
\]
One way to construct such a bridge is to define
\be \label{e:BB1st}
X^x_t:=B_t + (x-B_1)t.
\ee
The above constructs a Gaussian process  with  $E[X^x_t]=xt$, continuous on $[0.1]$, and  for $s<t$
\[
Cov(X_s^x,X_t^x)=E[(B_t-tB_1)(B_s-sB_1)]=s(1-t).
\]
To show that the above construction indeed is the desired Brownian bridge, let us verify that $B_t$ conditioned on $B_1=x$ has the above covariance structure and has the same mean. Indeed,
\be \label{e:BBlaw}
P(B_s\in dy, B_t\in dz|B_1=x )=\frac{p(s,x)p(t-s,z-y)p(1-t,x-z)}{p(1,x)}dydz,
\ee
where $p(\cdot,\cdot)$ is the transition density of $B$, i.e.
\[
p(s,y-z)dy=P(B_{t+s}\in dy|B_t=z)=\frac{1}{\sqrt{2\pi s}}\exp\left(-\frac{(y-z)^2}{2s}\right).
\]
\begin{exercise}
	Using (\ref{e:BBlaw}) show that $E[B_t|B_1=x]=tx$ and $Cov(B_s,B_t)=s(1-t)$ for $s<t<1$.
\end{exercise}
However, the above construction is not adapted to the filtration of $(B_t)$ and requires the knowledge of $B_1$. As such, this will not be useful to construct the bridge in the Kyle model since although the insider observes $B$ continuously in time she does not know the value of $B_1$ at time $t<1$. 

The second construction of a Brownian bridge uses SDEs. Consider
\be \label{e:BB2nd}
X^x_t= B_t+ \int_0^t \frac{x-X^x_s}{1-s}ds.
\ee
Its unique solution is given by
\be \label{e:BB2ndsol}
X^x_t=xt +(1-t)\int_0^t\frac{1}{1-s}dB_s
\ee
\begin{exercise}
	Verify the above using integration by parts.
\end{exercise}
It is clear from (\ref{e:BB2ndsol}) that $E[X^x_t]=xt$. Moreover,
\[
Cov(X^x_s,X^x_t)=(1-s)(1-t)E\left[\int_0^s\frac{1}{1-r}dB_r\int_0^t\frac{1}{1-r}dB_r\right]=(1-s)(1-t)\int_0^s\frac{1}{(1-r)^2}dr=s(1-t).
\]
It remains to show that $X^x$ is continuous on $[0,1]$. Continuity on $[0,1)$ is clear. What is left to show is that $X^x_t \rar x$ as $t \rar 1$.
\begin{exercise}
	Show that $P(\lim_{t\rar 1}X^x_t=x)=1$, where $X^x$ is defined by (\ref{e:BB2ndsol}) by observing that 
	\[
	\int_0^t\frac{1}{1-s}dB_s\eid W_{\frac{t}{1-t}},
	\]
	where $W$ is some Brownian motion. Use also the fact that $\lim_{t \rar \infty}\frac{W_t}{t}=0$ with probability $1$ for any Brownian motion.
\end{exercise}
\begin{exercise}
	Show that the solution of (\ref{e:BB2nd}) is a semimartingale. (Hint: Compute $E[|x-X^x_t]$.)
\end{exercise}
We have performed two different constructions of a Brownian bridge. Are there indeed very different or is there a link between them other than that they have the same distribution. 

To answer this question let us recall that the first construction is not adapted and depends on the knowledge of $B_1$. Let us now see what happens to $B$ if we enlarge its filtration with $B_1$. Let $(\cF_t)_{t \rar 1}$ be the natural filtration of $B$ and $\cG_t$ denote $\cF_t \vee \sigma(B_1)$ following the approach of Mansuy and Yor \cite{my}. To compute the decomposition of $B$ under $\cG_t$, take a text function $g$ and a set $A \in \cF_s$ and consider
\[
M_t:=E[g(B_1)|\cF_t]=\int g(y)p(1-t,y-B_t)dy.
\]
If we apply the Ito formula to $p$ and note that $p_t =\frac{1}{2}p_{xx}$, we obtain
\[
M_t=M_s -\int_s^t \int g(y)p_x(1-t,y-B_r)dydr.
\]
Thus,
\bean
E[(B_t-B_s)g(B_1)\chf_A]&=&E[(B_t-B_s)M_1\chf_A]\\
&=&E[(B_t-B_s)M_t\chf_A]\\
&=&E\left[([M,B]_t-[M,B]_s)\chf_A\right]\\
&=&-E\left[\chf_A\int_s^t\int g(y)p_x(1-t,y-B_r)dydr\right]\\
&=&-E\left[\chf_A\int_s^t\int g(y)\frac{p_x(1-t,y-B_r)}{p(1-t,y-B_r)}p(1-t,y-B_r)dydr\right]\\
&=&-E\left[\chf_A\int_s^t g(B_1)\frac{p_x(1-t,B_1-B_r)}{p(1-t,B_1-B_r)}dr\right]\\
&=&-E\left[\chf_A g(B_1)\int_s^t \frac{p_x(1-t,B_1-B_r)}{p(1-t,B_1-B_r)}dr\right].
\eean
However, the above implies that
\[
\beta_t:=B_t +\int_0^t \frac{p_x(1-t,B_1-B_r)}{p(1-t,B_1-B_r)}dr
\]
is a $\cG_t$-martingale. In view of L\`evy's characterisation of Brownian motion, $\beta$ must be a $\cG_t$-Brownian motion. Moreover, 
\[
\frac{p_x(t,y)}{p(t,y)}=-\frac{y}{t}.
\]Thus,
\[
B_t=\beta_t +\int_0^t \frac{B_1-B_r}{1-r}dr,
\]
where $\beta$ is a $\cG$-Brownian motion. Now, if we plug this into (\ref{e:BB1st}), we obtain
\bean
X^x_t&=& \beta_t +\int_0^t \left\{x-B_1+\frac{B_1-B_r}{1-r}\right\}dr\\
&=&\beta_t +\int_0^t\frac{x-xr + rB_1 -B_r}{1-r}dr=\beta_t +\int_0^t\frac{x-(B_r + r(x-B_1))}{1-r}dr\\
&=&\beta_t +\int_0^t\frac{x-X_r^x}{1-r}dr,
\eean
which is a weak solution of (\ref{e:BB2nd}).

\subsection{Absolute continuity of laws and bridges of general diffusions} Clearly, the probability law of Brownian bridge is not absolutely continuous with respect to that of Brownian motion when we consider the whole trajectory from $0$ to $1$. However, this does not imply the singularity of laws when the trajectories are confined to intervals of the form $[0,T]$ for $T<1$. 

Indeed, if we consider (\ref{e:BBlaw}) and compare it with the corresponding law for Brownian motion, we may conjecture that the law $P^{0\rar x}_{0 \rar 1}$ induced by the Brownian bridge on the space of continuous functions admits
\[
\frac{dP^{0\rar x}_{0 \rar 1}}{dP^0}\big|_{\cF_T}=p(1-T, x-X_T),\quad T<1,
\]
where $P^0$ is the law induced by the Brownian motion starting at $0$ and $X$ is the coordinate process. 

This guess can be verified by Girsanov transformation. To this end observe that $M_t:=p(1-t,x-B_t)$ is a bounded martingale on $[0,T]$ for $T<1$. Thus, if define a probability measure $Q$ on $\cF_T$ by
\[
\frac{dQ}{dP}=M_T,
\]
Girsanov theorem implies that $B$ follows under $Q$ the following dynamics:
\[
B_t= \beta_t +\int_0^t\frac{x-B_s}{1-s}ds
\]
for some $Q$-Brownian motion $\beta$.
That is, $B$ under $Q$ is a weak solution of (\ref{e:BB2nd}). Since the strong uniqueness holds on $[0,T]$, so does weak uniqueness. This in particular implies the aforementioned absolute continuity. 

The above procedure also hints us how to proceed in order to construct a bridge of a given diffusion. To make the construction precise suppose that $X$ is the unique solution of 
\[
X_t= y + \int_0^t \sigma(X_s)dB_s +\int_0^t b(X_s)ds,
\]
for some functions $\sigma$ and $b$. Suppose that the coefficients of the SDE are reqular enough so that the solution admits a transition density with respect to Lebesgue measure. Let $p(t,y,z)$ denote this density. That is, 
\[
p(t,y,z)dz=P(X_{s+t}\in dz|X_s=y).
\]
If we impose more conditions on the SDE, we can ensure that for any $z$ the mapping $(t,y)\mapsto p(t,y,z)$ is smooth enough to apply Ito's formula.  Let's suppose this is the case so that
\[
M_t:=p(1-t,X_t,x)=M_0+\int_0^t p_y(1-s,X_s,x)\sigma(X_s)dB_s
\]
is a martingale on $[0,T]$ for any $T<1$. Thus, Girsanov theorem yields a $Q_T$ on $\cF$ defined by
\[
\frac{dQ_T}{dP}=\frac{M_T}{M_0}
\]
under which $X$ follows
\[
X_t=y + \int_0^t \sigma(X_s)d\beta_s +\int_0^t \left\{b(X_s)+\sigma^2(X_s)\frac{p_y(1-s,X_s,x)}{p(1-s,X_s,x)}\right\}ds, \quad t<T. \]
It can be fairly easily shown that $Q_T$ converges to some measure $Q$, which we would like to identify with the law of $X$ conditioned to be equal to $x$ at time $1$. This requires certain measure theoretic technicalities and is beyond the scope of these notes. However, under fairly mild conditions, this can be done (see \cite{smb}) and the above recipe for the SDE for $X$ conditioned to arrive (continuously) at $x$ at time $1$ therefore works. 

A well-known example of the above construction is the $3$-dimensional Bessel bridge. That is, for $x \geq 0$,
\[
X_t =x+ \int_0^t\frac{1}{X_s}ds.
\]
The solution of the above SDE never hits $0$ after its initiation and converges to $\infty$ as $t \rar \infty$. But we can condition $X$ so that it converges to $0$ at time $1$ using the above recipe. There is a slight difficulty here since $p(1-t, y,0)=0$ (see Section XI.1 of Revuz and Yor \cite{RY} for the exact description of this density). To circumvent this, define
\[
M_t=h(t,X_t),
\]
where 
\[
h(t,y)= \lim_{z\rar 0}\frac{p(1-t,y,z)}{p(1,x,z)}.
\] In this case,
\[
\lim_{x \rar 0}\frac{h_y(t,y)}{h(t,y)}=-\frac{y}{1-t}.
\]
Thus, the corresponding SDE for the $3$-dimensional Bessel bridge from $x$ to $0$ is given by
\be \label{db:BesBr}
X_t=x + B_t +\int_{0}^t\left\{\frac{1}{X_s}-\frac{X_s}{1-s}\right\}ds.
\ee
\begin{remark}
	A similar connection with the enlargement of filtration theory exists in the case of general diffusions as well. Under certain integrability conditions and by repeating the same argument we used for Brownian motion we can show that \[
	dX_t=\sigma(X_t)d\beta_t + \left\{b(X_t)+\sigma^2(X_t)\frac{p_y(1-t,X_t,X_1)}{p(1-t,X_t,X_1)}\right\}dt,
	\]
	where $\beta$ is a $\cG$-Brownian motion while $\cG_t=\cF^X_t\vee \sigma(X_1)$.
\end{remark}
\section{Proof of Theorem \ref{mm:t:AC}}
	Given that we have the required machinery we can now return to the Kyle model and finds its equilibrium. Our first task is to prove Theorem \ref{mm:t:AC}.
	
	Using Ito's formula for general semimartingales (see, e.g. Theorem II.32 in \cite{Pro}) we obtain
	\bean
	dH(t, Y_t)&=&H_t(t,Y_{t-})dt + H_y(t,Y_{t-})dY_t + \half H_{yy}(t,Y_{t-})d[Y,Y]^c_t \\
	&&+ \left\{H(t,Y_t)-H(t,Y_{t-})-H_y(t,Y_{t-})\Delta Y_t \right\}\\
	&=& H_x(t,Y_{t-})w(t,Y_{t-})dY^c_t +dFV_t,
	\eean
	where $FV$ is of finite variation. Therefore,
	\be
	[\theta, S]^c_t =\int_0^t  H_y(s,Y_{s-})d[Y^c,\theta]_s =\int_0^t  H_y(s,Y_{s-})\left\{d[B,\theta]_s+d[\theta,\theta]^c_s\right\}. \label{mm:eq:THQV}
	\ee
	Moreover, integrating (\ref{mm:eq:insW}) by parts (see Corollary 2 of Theorem  II.22 in \cite{Pro}) we get
	\be \label{mm:eq:Wibp}
	W^{\theta}_1=f(Z_1)\theta_{1-}-\int_0^{1-}H(t,Y_{t-}))d\theta_t -[\theta,H(\cdot, Y)]_{1-}
	\ee
	since the jumps of $\theta$ are summable. 
	
	Consider the function
	\begin{equation}\label{mm:e:generalG_a}
	\Psi^a(t,x):=\int_{\xi(t,a)} ^x (H(t,u)-a)du+\frac{1}{2}\int_t^1H_y(s,\xi(s,a))ds
	\end{equation}
	where $\xi(t,a)$ is the unique solution of $H(t,\xi(t,a))=a$.
	Direct differentiation with respect to $x$ gives that
	\begin{equation}\label{mm:e:G_a_x}
	\Psi^a_x(t,x)=H(t,x)-a.
	\end{equation}
	Differentiating above with respect to $x$ gives
	\begin{equation}\label{mm:e:G_a_xx}
	\Psi^a_{xx}(t,x)=H_x(t,x).
	\end{equation}
	Direct differentiation of $\Psi^a(t,x)$ with respect to $t$ gives
	\begin{eqnarray}
	\Psi^a_t(t,x) &=&\int_{\xi(t,a)} ^x \frac{H_t(t,u)}{w(t,u)}du-\frac{1}{2}H_x(t,\xi(t,a))w(t,\xi(t,a)) \nn \\ 
	&=&-\frac{1}{2}H_x(t,x).\nn
	\end{eqnarray}
	 Combining the above with (\ref{mm:e:G_a_xx})
	 gives
	\begin{equation} \label{mm:eq:pdepsi}
	\Psi^a_t+\frac{1}{2}w(t,x)^2\Psi^a_{xx}=0.
	\end{equation}
	Applying Ito's formula we  deduce 
	\bean
	d\Psi^a(t,Y_t)&=&\Psi_t(t, Y_{t-})dt + (H(t,Y_{t-})-a)dY^c_t\nn \\
	&&+\half \Psi_{xx}(t,Y_{t-})d[Y,Y]^c_t + \Psi^{a}(t,Y_t)-\Psi^a(t,Y_{t-})\\
	&=&\left(H(t,Y_{t-})-a\right)dY^c_t+\half \Psi_{xx}(t,Y_{t-})\left(d[Y,Y]^c_t-dt\right)\nn \\
	&& + \Psi^{a}(t,Y_t)-\Psi^a(t,Y_{t-}) \nn \\
	&=&\left(H(t,Y_{t-})-a\right)dY^c_t+ \half H_y(t,Y_{t-})(d[Y,Y]^c_t-dt)\nn \\
	&& + \Psi^{a}(t,Y_t)-\Psi^a(t,Y_{t-})
	\eean
	where one to the last equality follows from (\ref{mm:eq:pdepsi}) and the last one is due to (\ref{mm:e:G_a_xx}). 
	
	The above implies
	\bean
	\Psi^a(1-,Y_{1-})&=&\Psi^a(0,0)+ \int_0^{1-}H(t,Y_{t-})(dB_t +d\theta_t) -a (B_1 +\theta_{1-})\\
	&&+ \half \int_0^{1-}H_y(t,Y_{t-})(d[Y,Y]^c_t-dt) \\
	&&+ \sum_{0<t<1}\left\{\Psi^{a}(t,Y_t)-\Psi^a(t,Y_{t-})-(H(t,Y_{t-})-a)\Delta\theta_t\right\}
	\eean
	Combining the above and (\ref{mm:eq:Wibp}) yields
	\bean
	E^{0,z}\left[W_1^{\theta}\right]&=&E^{0,z}\left[\Psi^{f(Z_1)}(0,0)-\Psi^{f(Z_1)}(1-,X_{1-})-f(Z_1)B_1+\int_0^{1-}H(t,X_{t-})dB_t\right. \nn \\
	&&+\half \int_0^{1-}w(t,X_{t-})H_x(t,X_{t-})(2d[B,\theta]_t+d[\theta,\theta]^c_t)\nn \\
	&&+ \sum_{0<t<1}\left\{\Psi^{f(Z)}(t,Y_t)-\Psi^{f(Z)}(t,Y_{t-})-(H(t,Y_{t-})-f(Z))\Delta\theta_t\right\}\nn\\
	&&-\left.\int_0^1  H_y(s,Y_{s-})w(s,Y_{s-})\left\{d[B,\theta]_s+d[\theta,\theta]^c_s\right\} - \sum_{0<t<1}(H(t,Y_t)-H(t,Y_{t-}))\Delta \theta_t \right]\nn \\
	&=&E^{0,z}\left[\Psi^{f(Z)}(0,0)-\Psi^{f(Z)}(1-,Y_{1-})-\half \int_0^{1-}H_y(t,Y_{t-})d[\theta,\theta]^c_t\right.\nn \\
	&&+\left. \sum_{0<t<1}\left\{\Psi^{f(Z_1)}(t,Y_t)-\Psi^{f(Z)}(t,Y_{t-})-(H(t,Y_{t})-f(Z))\Delta\theta_t\right\} \right]\nn \\
	&\leq&E^{0,z}\left[\Psi^{f(Z)}(0,0)-\Psi^{f(Z)}(1-,Y_{1-})\right]
	\eean
	since $H$ is increasing, and
	\bean
	\Psi^{a}(t,Y_t)-\Psi^{a}(t,Y_{t-})-(H(t,Y_t)-a)\Delta\theta_t&=&\int_{Y_{t-}}^{Y_t}(H(t,u)-a)du -(H(t,Y_t)-a)\Delta\theta_t \\
	&\leq&(H(t,Y_t)-a)\Delta Y_t  -(H(t,Y_t)-a)\Delta\theta_t=0. 
	\eean
	Note the inequality above becomes equality if and only if $\Delta \theta_t=0$ due to the strict monotonicity of $H$. Moreover, $\Psi^{f(Z)}(1-,Y_{1-})\geq 0$ with an equality if and only if $H(1-, Y_{1-})=f(Z)$. Therefore, $E^{0,z}\left[W_1^{\theta}\right]\leq E^{0,z}\left[\Psi^{f(Z)}(0,0)\right]$ for all admissible $\theta$s end equality is reached if and only if the following two conditions are met. 
	\begin{itemize}
		\item[i)] $\theta$ is continuous and of finite variation.
		\item[ii)] $H(1-,Y_{1-})=f(Z), \, P^{0,z}$-a.s..
	\end{itemize} 
	
	Hence, the proof will be complete if one can find a sequence of  absolutely continuous admissible strategies, $(\theta^n)_{n \geq 1}$ such that $\lim_{n \rar \infty}E^{0,z}\left[W_1^{\theta^n}\right]=E^{0,z}\left[\Psi^{f(Z_1)}(0,0)\right]$. However, note that the admissibility will be immediate as soon as $\theta$ is a semimartingale since $H$ is assumed to be bounded.
	
	Define
	\[
	Y_t:=B_t +\int_0^t \frac{H^{-1}(1,f(z))-Y_s}{1-s}ds.
	\]
	Recall that the above SDE has a semimartingale solution, which is a Brownian bridge converging a.s. to $H^{-1}(1,f(z))$.
	
	Set 
	\[
	d\theta_t=\frac{H^{-1}(f(z))-Y_t}{1-t}dt
	\]
	and observe that since $\theta$ is absolutely continuous, we have
	\[
	E^{0,z}[W^{\theta}]=E^{0,z}\left[\Psi^{f(Z)}(0,0)-\Psi^{f(Z)}(1,Y_{1}))\right].
	\]
	On the other hand, $\Psi^{f(Z)}(1,Y_{1}))=0$ since $H(1,Y_1)=f(Z)$. Thus, $\theta$ is optimal.
\section{Equilibrium}
Let us establish the equilibrium in the case of bounded asset value. 
\begin{theorem}
	\label{t:equilibrium} Suppose $f$ is bounded. Define $\theta$ by setting $\theta_0=$ and\[
	d\theta_t=\frac{Z-Y_t}{1-t}dt.
	\]
	Let $H$ be the unique solution of
	\[
	H_t +\half H_{yy}=0, \quad H(1,y)=f(y).
	\]
	Then, $(H,\theta)$ is an equilibrium.
\end{theorem}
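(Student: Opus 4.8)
The plan is to verify the two conditions of Definition~\ref{eqd}: that, given $\theta$, $H$ is an admissible rational pricing rule, and that, given $H$, $\theta$ solves the insider's problem. The second is essentially already in hand. The function
\[
H(t,y)=\int f(z)\,p(1-t,z-y)\,dz=E\!\left[f\big(y+\sqrt{1-t}\,\xi\big)\right],\qquad \xi\sim N(0,1),
\]
is the classical solution of the backward heat equation with terminal datum $f$, so it solves (\ref{mm:e:pdeh}); it belongs to $C^{1,2}([0,1)\times\bbR)$, it is bounded because $f$ is, and $y\mapsto H(t,y)$ is strictly increasing for every $t<1$ since $f$ is strictly increasing and expectation is strictly monotone. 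Hence $H$ is admissible and satisfies the hypotheses of Theorem~\ref{mm:t:AC}. Because $H(1,\cdot)=f$ is strictly increasing, $H^{-1}(1,f(z))=z$, so the process $\theta$ in the statement is exactly the absolutely continuous strategy built at the end of the proof of Theorem~\ref{mm:t:AC}: under $P^{0,z}$ the total order $Y=\theta+B$ solves $dY_t=dB_t+\frac{z-Y_t}{1-t}dt$, which by Section~\ref{s:db} is a Brownian bridge pinned at $z$, so $Y_{1-}=z=Z$, $P^{0,z}$-a.s., and $H(1-,Y_{1-})=f(Z)$. Admissibility of $\theta$ is immediate ($\theta$ is absolutely continuous, hence a semimartingale, and boundedness of $H$ gives the no-doubling bound), so Theorem~\ref{mm:t:AC} delivers insider optimality.

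It remains to verify $H(t,Y_t)=\bbE[f(Z)|\cF^M_t]$ for $t\in[0,1)$, with $\bbE$ under $\bbP$. Under $\bbP$, $Z$ is standard normal, independent of the $\bbP$-Brownian motion $B$, and $Y$ solves $dY_t=\frac{Z-Y_t}{1-t}dt+dB_t$, $Y_0=0$. This is a linear filtering problem with constant signal $Z$ and observation $Y$: the special case of Exercise~\ref{f:ex:kyle} with vanishing signal diffusion $\sigma\equiv0$, prior variance $1$ (so $s(t)\equiv1$), and with the deterministic coefficient in the drift of $Y$ equal to $1-t=s(t)-t\ge0$. Hence, conditionally on $\cF^Y_t$, $Z$ is normal with mean $\what Z_t=\bbE[Z|\cF^Y_t]$ and variance $v_t=1-t$. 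Plugging $h_t=\frac{Z-Y_t}{1-t}$ and the constant signal into the Kalman--Bucy equation (Theorem~\ref{filter}) gives $d\what Z_t=\frac{v_t}{1-t}dN_t=dN_t$, where $N_t=Y_t-\int_0^t\frac{\what Z_s-Y_s}{1-s}ds$ is the innovation; therefore $d(\what Z_t-Y_t)=-\frac{\what Z_t-Y_t}{1-t}dt$ with $\what Z_0-Y_0=0$, forcing $\what Z_t=Y_t$ on $[0,1)$. Consequently, using the representation of $H$,
\[
\bbE[f(Z)|\cF^M_t]=\bbE[f(Z)|\cF^Y_t]=\int f(z)\,p(1-t,z-\what Z_t)\,dz=H(t,\what Z_t)=H(t,Y_t),
\]
which is (\ref{mm:d:mm_obj}); passing from $\cF^M_t$ to $\cF^Y_t$ is harmless as they differ only by $\bbP$-null sets and right-continuity.

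The point needing care, and the one I expect to be the main obstacle, is the use of the filtering theorems near $t=1$. Writing $Y_s=Zs+(1-s)\int_0^s(1-r)^{-1}dB_r$ one computes $h_s=Z-\int_0^s(1-r)^{-1}dB_r$, whence $\bbE[h_s^2]=(1-s)^{-1}$, so the standing hypothesis $\bbE\int_0^T h_s^2\,ds<\infty$ of Theorems~\ref{inn} and \ref{filter} holds for each $T<1$ but fails at $T=1$. The fix is to run the filter on $[0,T]$ for every $T<1$ and to observe that the identity $\what Z_t=Y_t$, hence (\ref{mm:d:mm_obj}), is only needed on $[0,1)$ --- precisely the interval on which an admissible pricing rule is defined. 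It is also worth recording that this same computation shows $Y$ is itself a $\bbP$-Brownian motion (Gaussian, with $\mathrm{Cov}(Y_s,Y_t)=s\wedge t$), which streamlines all the integrability bookkeeping and explains why the market makers, who see only $Y$, perceive nothing to exploit.
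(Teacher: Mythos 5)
Your proof is correct and follows essentially the same route as the paper: insider optimality via Theorem \ref{mm:t:AC} and the Brownian bridge pinned at $z$ under $P^{0,z}$, and rationality via the Kalman--Bucy filter of Exercise \ref{f:ex:kyle} showing $v(t)=1-t$ and $\widehat{Z}=Y$, so that $Y$ is a Brownian motion in its own filtration and $\bbE[f(Z)|\cF^M_t]=H(t,Y_t)$. Your explicit check that $\bbE\int_0^T h_s^2\,ds<\infty$ only for $T<1$, and the observation that rationality is only required on $[0,1)$, is a worthwhile refinement the paper leaves implicit.
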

\begin{proof}
	First note that since $f$ is bounded, $H$ is bounded by the same constant due to its Feynman-Kac representation.
	Thus, to show that $\theta$ is admissible it suffices to show that it is a semimartingale. Indeed, given $Z=z$
		\[
	Y_t=B_t +\theta
	\]
	is a Brownian bridge converging to $z$. Thus, $Y$ is a $P^{0,z}$-semimartingale for each $z$. Consequently, $\theta$ is a  $P^{0,z}$-semimartingale for each $z$. Moreover, $H(1,Y_1)=f(Z), \, P^{0,z}$-a.s.. Thus, $\theta$ is optimal given $H$. 
	
	Therefore, it remains to show that $H$ is a rational pricing rule. Note that if $Y$ is a Brownian motion in its own filtration, 
	\[
	H(t,y)=\bbE[f(Y_1)|\cF^Y_t]
	\]
	due to the Feynman-Kac representation of $H$, which in turn implies $H$ is a rational pricing rule. 
	
	Let us next show that $Y$ is a Brownian motion in its own filtration. This requires finding the conditional distribution of $Z$ given $Y$. Note that we are in fact in the setting of Exercise \ref{f:ex:kyle} with $\sigma(s)=0$ and $s(0)=1$.
	
	Exercise  \ref{f:ex:kyle} shows that the conditional distribution of $Z$ given $\cF^Y_t$ is Gaussian with mean $\what{X}_t:=\bbE[Z|\cF^Y_t]$ and variance $v(t)$, where 
	\[
	(1-t)^2 v'(t)+v^2(t)=0,
	\]
	and
	\[
	\what{X}_t= \int_0^t \frac{v(s)}{1-s}dN_s,
	\]
	where $N$ is the innovation process. 
	
	The unique solution of the  ODE with $v(0)=1$ is given by $v(t)=1-t$. Consequently, $\what{X}=N$, i.e. $\what{X}$ is an $\cF^y$-Brownian motion. Let us now see that $\what{X}=Y$. 
	
	Indeed,
	\[
	d\what{X}_t=dN_t= dY_t - \frac{\what{X}_t-Y_t}{1-t}dt.
	\]
	In other words, $\what{X}$ solves an SDE given $Y$. Since this is a linear SDE, it has a unique solution, which is given by $Y$ itself. Hence $Y$ is an $\cF^Y$-Brownian motion.	\end{proof}
Some remarks are in order. The {\em Kyle's lambda} or the market impact of trades is given by
\[
\lambda(t,y):=H_y(t,y).
\]
Thus, the flatter is $f$ the more liquid is the market. In other words, the market's liquidity depends crucially on the sensitivity of insider's valuation to changes in $Z$ (as soon as $Z$ is fixed to be a standard Normal at the beginning). 

Also note that the insider is indifferent among all bridge strategies that bring the market price to $f(Z)$ at time $1$. One such bridge is when 
\[
dY_t=dB_t+k\frac{Z-Y_t}{1-t}dt
\] 
for some $k$ while $H$ is still as in Theorem \ref{t:equilibrium}. Although this is optimal for the insider, it cannot make an equilibrium when combined with $H$ since $H(t,Y_t)$ will not be a martingale when $Y$ is as above.
\section{Dynamic private signal}
So far we have considered the case when the informed trader has a static signal giving her an unbiased predictor for the future asset value. However, most often the informed traders are investment banks with research divisions. For such traders it is not realistic to assume that they do not update their research in time. 

As a simple extension of the previous model suppose the informed trader receives a Gaussian signal $Z$ such that
\[
Z_t =Z_0 +\sigma B^Z_t,
\]
where $\sigma<1$ and $Z_0\sim N(0, 1-\sigma^2)$. Thus, $Z_1$ is a standard Normal random variable. We will again assume a strictly increasing $f$ such that $f(Z_1)$ will be the informed trader's unbiased estimator at time $1$  for the value $V$ of the asset that will be revealed at time $1$. The objective of the insider will be the same as (\ref{ins_obj}). 

If we go through Theorem \ref{mm:t:AC}, we realise that its proof does not depend on whether the insider's signal is static or dynamic. So the same proof can be used to show that if an admissible absolutely continuous trading strategy brings the market price to $f(Z_1)$, it will be optimal. Thus, $(H, \theta)$ will be an equilibrium if $H$ is a rational pricing rule and $\theta$ is such a bridge strategy.  

To achieve this, the insider need to  choose a $\theta$ such that $Y$ is a Brownian motion in its own filtration such that $Y=Z_1$.  Thus, if $H$ is as in Theorem \ref{t:equilibrium}, $(H,\theta)$ will be an equilibrium. 
\begin{theorem}
Let $s(t):=(1-\sigma^2)(1-t)$ and consider\[
Y_t= B_t + \int_{0}^t\frac{Z_r-Y_r}{s(r)}dr.
\]
Then, conditional distribution of $Z_t$ given $\cF^Y_t$ is Gaussian with mean $Y_t$ and variance $s(t)$. In particular, $Y$ is a Brownian motion in its own filtration.
\end{theorem}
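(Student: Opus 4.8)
The plan is to read this statement off the Gaussian filtering theory already developed, specifically Exercise~\ref{f:ex:kyle}, after recognising the pair $(Z,Y)$ as an instance of the Kalman--Bucy system. Writing the dynamics as $dZ_t=\sigma\,dB^Z_t$ and $dY_t=dB_t+\frac{Z_t-Y_t}{s(t)}\,dt$, with $B$ and $B^Z$ independent, this is precisely the setting of Exercise~\ref{f:ex:kyle} with the unobserved signal $Z$ in place of $X$, constant volatility $\sigma(\cdot)\equiv\sigma$, initial variance $1-\sigma^2$, and the function there called ``$f$'' playing the role of $s(\cdot)$ here. The identity to check is that, writing $\tilde s(t):=(1-\sigma^2)+\sigma^2 t$ for the ``$s(t)$'' of the exercise, one has $\tilde s(t)-t=(1-\sigma^2)(1-t)=s(t)$, and $s(t)\ge 0$ on $[0,1]$ since $\sigma<1$; so the hypotheses of part~(c) of that exercise are met. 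The coefficients of the linear system for $(Z,Y)$ are Lipschitz on every $[0,T]$ with $T<1$, so $(Z,Y)$ is a genuine Gaussian semimartingale system there; I would run the filtering argument on $[0,T]$ and let $T\uparrow 1$, the integrability requirement $\bbE\int_0^T h_s^2\,ds<\infty$ being trivial for $T<1$ because $1/s(\cdot)$ is bounded on $[0,T]$.

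Gaussianity of the conditional law of $Z_t$ given $\cF^Y_t$ is then immediate from the joint Gaussianity of $(Z,Y)$. For the mean and variance I would apply Theorem~\ref{filter} with $\alpha=0$, $\eta=\sigma$, $\rho=0$ and $h_s=(Z_s-Y_s)/s(s)$, which is affine in $Z_s$: this yields $\what Z_t=\int_0^t \frac{v_s}{s(s)}\,dN_s$, where $N$ is the innovation process of Theorem~\ref{inn} and $v_t:=\bbE[(Z_t-\what Z_t)^2\mid\cF^Y_t]$, while $v$ solves the Riccati equation $s(t)^2v'(t)+v(t)^2=\sigma^2 s(t)^2$ with $v(0)=1-\sigma^2$ (this is part~(b) of Exercise~\ref{f:ex:kyle} under the above substitution, via $\bbE Z^3=\mu(\mu^2+3\sigma^2)$ for $Z\sim N(\mu,\sigma^2)$). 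A direct substitution shows $v(t)=s(t)=(1-\sigma^2)(1-t)$ solves this ODE with the right initial value, and uniqueness for the Riccati equation forces $v\equiv s$. Hence $\what Z_t=\int_0^t dN_s=N_t$, so the conditional variance is exactly $s(t)$ and $\what Z$ is an $\cF^Y$-Brownian motion.

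It remains to identify $\what Z$ with $Y$. From the definition of the innovation process, $N_t=Y_t-\int_0^t\what h_s\,ds=Y_t-\int_0^t\frac{\what Z_s-Y_s}{s(s)}\,ds$, and combining this with $\what Z=N$ shows that $D_t:=\what Z_t-Y_t$ satisfies the pathwise linear ODE $dD_t=-\frac{D_t}{s(t)}\,dt$ with $D_0=\bbE Z_0-Y_0=0$, whence $D\equiv 0$ on $[0,1)$. Therefore the conditional law of $Z_t$ given $\cF^Y_t$ is $N(Y_t,s(t))$, and since $Y=\what Z$ we conclude that $Y$ is an $\cF^Y$-Brownian motion. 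This is the same chain of reasoning used in the proof of Theorem~\ref{t:equilibrium}, now carried out with $s(0)=1-\sigma^2$ and volatility $\sigma$ in place of $s(0)=1$ and volatility $0$.

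The only genuinely delicate point is the behaviour as $t\uparrow 1$: the drift $1/s(t)$ is not integrable at $t=1$, so all of the above must be done on $[0,T]$ with $T<1$ and then passed to the limit, exactly as in the Brownian-bridge discussion of Section~\ref{s:db}; in particular one should check that $Y$ extends continuously to $t=1$ with $Y_1=Z_1$ (the analogue of bridge-convergence there), which is what the equilibrium construction of the dynamic-signal model needs. Everything else is a routine specialisation of the filtering and bridge computations already recorded in the notes.
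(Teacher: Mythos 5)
Your proposal is correct and follows essentially the same route as the paper, whose proof simply defers to Exercise~\ref{f:ex:kyle} and the argument of Theorem~\ref{t:equilibrium}: you correctly match the parameters ($\sigma(\cdot)\equiv\sigma$, $s(0)=1-\sigma^2$, so that the exercise's $s(t)-t$ equals $(1-\sigma^2)(1-t)$), verify that $v\equiv s$ solves the Riccati equation, and identify $\what{Z}=N=Y$ via the innovation process. Your additional remarks on localising to $[0,T]$, $T<1$, and on the bridge convergence $Y_1=Z_1$ are sensible details that the paper leaves implicit.
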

\begin{proof}
	The proof is similar to the proof of Theorem \ref{t:equilibrium} and follows from Exercise \ref{f:ex:kyle}.
\end{proof}
Thus, if $\theta$ is chosen so that
\[
d\theta_t= \frac{Z_t-Y_t}{s(t)-t}dt
\]
and $H$ is as in Theorem \ref{t:equilibrium}, the pair $(H, \theta)$ is an equilibrium. The proof of that $\theta$ is a semimartingale is similar to the static bridge case and is left as an exercise.
\begin{remark}
	Observe that the market maker's pricing rules are the same in both static and dynamic private signal cases. Thus, one cannot deduce the type of private information by looking at market prices.
	
	When the private information is dynamic, the informed trader has a signal converging to the value of $Z_1$. Similarly, we can consider the total order process $Y$ as the signal of  market makers, again converging to $Z_1$. The remaining uncertainty at time $t$ for the informed trader is $Z_1-Z_t$ while it is $Z_1-Y_t=Y_1-Y_t$ for the market makers. The variance of $Z_1-Z_t$ is $s(t)$ and the variance of $Y_1-Y_t=1-t$. Since $s(t)<1-t$, the remaining uncertainty is lower for the informed trader, hence she has a competitive advantage.  
\end{remark}
\section{Risk averse market makers}
When the market makers are risk neutral as above, the equilibrium price evolves as a martingale in market makers' filtration. In \cite{RMM} the market makers are allowed to be risk-averse. More precisely, there are $N$ market makers with exponential utility aith the same risk aversion parameter $\rho$. The Bertran competition is interpreted as each market maker gaining or losing any expected utility in the equilibrium and the total order is split equally among them. That is, each market maker holds $-\frac{Y_t}{N}$ many shares and their utility evolves as a martingale.

Suppose that $d\theta_t =\alpha_tdt$ is an admissible trading strategy of the insider so that $Y$ in its own filtration satisfies
\[
dY_t= \sigma dB^Y_t +\hat{\alpha}_t dt,
\]
where $B^Y$ is an $\cF^M$-Brownian motion and $\hat{\alpha}$ is the $\cF^M$-optional projection of $\alpha$.  The best response of the market makers is to choose a price, $S$, that will satisfy the zero-utility gain condition. Let  price $S$ follow
\[
dS_t=Z_t dB^Y_t + \mu_t dt,
\]
for some predictable process $Z$ and an optional process $\mu$ that  are to be determined by the market makers.  As there is a potential discrepancy between $S_1$ and $V$, there is  a possibility of a jump in the market makers' wealth at time $1$. More precisely, 
\[
\Delta G_1= \frac{Y_1}{N}(S_1-V).
\]
However, the zero-utility gain condition implies
\[
1=\bbE\left[\exp\left(-\frac{\rho Y_1}{N}(S_1-V)\right)\bigg|\cF^M_1\right],
\]
which is equivalent to 
\be \label{e:bsdet}
\bbE\left[\exp\left(\frac{\rho Y_1}{N}V\right)\bigg|\cF^M_1\right]=\exp\left(\frac{\rho Y_1}{N}S_1\right).
\ee
On the other hand, if we compute the dynamics of $U(G)$ for $t<1$ by Ito's formula, we obtain
\[
dU(G_t)=U(G_t)\frac{\rho}{N}Y_t\left\{ \sigma_t dB^Y_t +\left(\mu_t +\frac{\rho}{2N} Y_t \sigma_t^2\right)dt\right\}.
\]
Reiterating the zero-utility gain condition for $t<1$ shows that we must have
\[
\mu_t=-\frac{\rho}{2 N}Y_t Z_t^2.
\]
Therefore, the zero-utility gain condition stipulates that  the price $S$ follows
\be \label{e:bsde}
dS_t =Z_t dB^Y_t -\frac{\rho}{2 N}Y_t Z_t^2 dt,
\ee
and the market makers' problem is to find $(Z, S)$ to solve (\ref{e:bsde}) with the terminal condition (\ref{e:bsdet}) given the total demand process $Y$.

The BSDE in (\ref{e:bsde}) is reminiscent of the quadratic BSDEs, which have been studied extensively, and the connection of which to problems arising in mathematical finance is well-established. The essential deviation  of (\ref{e:bsde}) from  the BSDEs considered in the mathematical finance literature   is that the coefficient of $Z_t^2$ in (\ref{e:bsde}) is $\frac{\rho}{2 N}Y_t$, which is in general unbounded. This makes the direct application of the results contained in the current literature for quadratic BSDEs to (\ref{e:bsde}) impossible. Moreover, most importantly, the boundary value of the above BSDE is highly non-standard and depends on the solution. 

However, if we turn to a Markovian equilibrium,  i.e. consider $S_t=H(t,Y_t)$, it is natural to  expect that in equilibrium $\hat{\alpha}_t=\hat{\alpha}_t(t,Y_t, S_t, Z_t)$ for some deterministic function $\hat{\alpha}$ so that
\be \label{e:bsdef}
dY_t= \sigma dB^Y_t +\hat{\alpha}(t,Y_t, S_t, Z_t) dt.
\ee 
Thus, if a Markovian equilibrium can be attained it will provide a Markovian solution to the FBSDE defined by (\ref{e:bsdet})-(\ref{e:bsdef}), where $\hat{\alpha}$ is the optimal drift chosen by the insider. 

We now turn to the optimisation problem for the insider when $S_t=H(t,Y_t)$ for an admissible pricing rule $H$. Repeating basically what we have done in earlier sections,  we see that $H$ must satisfy a backward heat equation
\[
H_t +\frac{1}{2}H_{yy}=0,
\]
and therefore  Ito's formula will yield that $S$ should satisfy
$$
dS_t= H_y(t,Y_t)dY_t.
$$ 
Combining this with (\ref{e:bsde}) and (\ref{e:bsdef}) implies
\[
z\hat{\alpha}(t,y,s,z)=-\frac{\rho}{2 N}y z^2,
\]
i.e.
\be \label{e:ahatopt}
\hat{\alpha}(t,y,s,z)=-\frac{\rho}{2 N}y z
\ee
as soon as we note that $z= H_y (t,y)$ by the choice of $S$. 

Moreover, as before, the sole criterion of  optimality for the insider is that the strategy  fulfils the bridge condition $H(1, Y_1)=V$.  Thus,   if a Markovian equilibrium exists, 
\be \label{e:s3:Yineq}
dY_t= \sigma dB^Y_t -\frac{\sigma^2 \rho}{2 N}Y_t H_y(t,Y_t),
\ee
and  $H$ solves the backward heat  equation above and satisfies $H(1,Y_1)=V$. 

It is shown in \cite{RMM} that under suitable conditions an equilibrium exists and is characterised by  the following system of equations:
\begin{align}
H_t & + \frac{1}{2} H_{yy}=0 \label{eq:NCH} \\
dY_t &= d\beta_t -\frac{ \rho}{2 N}Y_t H_y(t,Y_t)dt \label{eq:NCY} \\
V& \eid H(1,Y_1), \label{eq:NCF}
\end{align}
with $Y_0=0$ where $\beta$ is a Brownian motion on some given probability space and $Y$ is understood to be a strong solution of the forward SDE. Note that the terminal condition of the PDE is given by the distribution of $Y_1$, which itself depends on $H$. Thus, the solution typically requires a fixed point argument. The proof is based on Schauder's fixed point theorem applied to a suitable class of measures.  This in turn gives a solution of the aforementioned forward-backward SDE. 
\begin{remark}
	It can be seen from the above system that the market makers' inventories, $-Y$, are mean-reverting in the equilibrium since $H_y>0$. This is more consistent with the empirical studies on market makers. 
\end{remark} 
\section{Notes and some related literature}]
Characterisation of the optimal strategy of the insider as an absolutely continuous process driving the market price to its fundamental value goes  back to Back \cite{B}. Theorem \ref{mm:t:AC} proves this characterisation in the more general setting of this chapter using techniques from \cite{Wu} and \cite{Danilova}.  

Dynamic private signal for the informed investor was first considered in continuous time in \cite{BP}. More recently,  Campi, \c{C}etin and Danilova \cite{GBP} developed a Dynamic Markov bridge theory to study equilibrium when the insider's signal is a general diffusion process. However, the existence of equilibrium in general - even in a Markovian framework- is till an open question. See the discussion in \cite{GBP} in this respect.

Default risk was first incorporated into the Kyle model in by Campi and \c{C}etin \cite{CC}. A generalisation of this framework with dynamic signal  was done in \cite{CCD}.

The case of risk averse insider with exponential utility was first studied by Holden and Subrahmanyam \cite{HS94} in discrete time. The authors characterised the equilibrium as the solution of a system of equations, which they were able to solve numerically. This model was brought to continuous time by Baruch \cite{baruch}, who limited the insider's strategies to absolutely continuous ones with speed of trading being an affine function of equilibrium price. 

Holden and Subrahmanyam \cite{HS92} allow  multiplicity of insiders having the same information trading in the market. They found via numerical analysis that the competition among the insiders lead to a faster revelation of their private information. In fact in the continuous time limit of their model the insiders would reveal their information immediately. This observation led to the study of the case when the insiders' private signals are not perfectly correlated. This was first done by Foster and Viswanathan \cite{FV} in discrete time and later extended to continuous time by Back, Cao and Willard \cite{bcw}. 

Although generalisation of the noise traders' demand process has attracted relatively less attention in the literature, there are nevertheless several works that address this issue.  Volatility of the noise traders fluctuating randomly over time was considered in the recent work of Colin-Dufresne and Fos \cite{CDF}. Biagini et al. \cite{BHMO} studied the Kyle's model when the noise demand follows a fractional Brownian motion. Corcuera et al. \cite{CFNO} in the context of noise demand process being a L\'evy process concluded that an equilibrium cannot exist when a jump component is present. Indeed \c{C}etin and Xing \cite{CX} have shown that the insider must follow a mixed strategy, i.e. apply an additional randomisation, in the equilibrium when the noise buy and sell orders follow Poisson processes. 

Empirical studies suggest that a model with risk averse  market makers is more realistic. However, there are only two papers that tackle this problem in the literature. Subrahmanyam \cite{sub} allowed the market makers to be risk averse in a one-period Kyle model. \c{C}etin and Danilova \cite{RMM} have shown that an equilibrium exists in the continuous time version of the Kyle model with risk averse market makers. Existence of an equilibrium with dynamic inside information and risk averse market makers is still an open question.

\end{document}